\newlist{todolist}{itemize}{2}
\setlist[todolist]{label=$\square$}
\definecolor{alered}{RGB}{222,94,100}
\definecolor{alecolor}{RGB}{198,113,190}
\definecolor{equationcolor}{RGB}{222,94,100}
\newtcolorbox[auto counter]{mybox}[2][]{%
    breakable,
    enhanced,
    sharp corners,
    colback=violet!3!white,
    colframe=violet!40!white,
    fonttitle=\bfseries,
    title={\centering \strut #2}, 
    enlarge bottom at break by=5mm,
    enlarge top at break by=5mm,
    overlay first={%
        \draw[black, line width=0.5mm](frame.south west)--(frame.south east);
        \node[anchor=north east] at (frame.south east) {continued on next page};
    },
    overlay middle={%
        \draw[black, line width=0.5mm](frame.south west)--(frame.south east);
        \draw[black, line width=0.5mm](frame.north west)--(frame.north east);
        \node[anchor=north east] at (frame.south east) {continued on next page};
        \node[anchor=south west] at (frame.north west) {continued from next page};
    }, overlay last={%
        \draw[black, line width=0.5mm](frame.north west)--(frame.north east);
        \node[anchor=south west] at (frame.north west) {continued from next page};},
    #1
}
\renewcommand{\v}[1]{\ensuremath{\boldsymbol #1}}
\newcommand{\ms}[1]{\textsf{#1}}
\newcommand{\1}{\mathbbm{1}}
\newcommand{\iden}{\mathbbm{1}}
\DeclareMathOperator{\diag}{diag}
\newtheorem{theorem}{Theorem}[section]
\newtheorem{corollary}{Corollary}[section]
\newtheorem{proposition}{Proposition}[section]
\newtheorem{lemma}[theorem]{Lemma}
\theoremstyle{definition}
\let\oldaddcontentsline\addcontentsline
\renewcommand{\addcontentsline}[3]{}
\begin{document}

\title{Reconstructing thermal states using dimensionally limited probes : \\
\texttt{\small A Model for Limited Control \& Memory in Quantum Thermodynamics}}

\author{Jake Xuereb}
\affiliation{Vienna Center for Quantum Science and Technology, Atominstitut, TU Wien, 1020 Vienna, Austria}
\email{jake.xuereb@tuwien.ac.at}

\author{A. de Oliveira Junior}
\affiliation{Center for Macroscopic Quantum States (bigQ), Department of Physics, Technical University of Denmark, 2800 Kongens Lyngby, Denmark}

\author{Fabien Clivaz}
\affiliation{Vienna Center for Quantum Science and Technology, Atominstitut, TU Wien, 1020 Vienna, Austria}
\affiliation{Fachhochschule Technikum Wien, Höchstädtplatz 6, 1200 Wien}

\author{Pharnam Bakhshinezhad}
\affiliation{Vienna Center for Quantum Science and Technology, Atominstitut, TU Wien, 1020 Vienna, Austria}

\author{Marcus Huber}
\affiliation{Vienna Center for Quantum Science and Technology, Atominstitut, TU Wien, 1020 Vienna, Austria}
\affiliation{Institute for Quantum Optics and Quantum Information - IQOQI Vienna, Austrian Academy of Sciences, Boltzmanngasse 3, 1090 Vienna, Austria}
\date{November 6\textbf{}, 2025}

\begin{abstract}
Whilst the complexity of acquiring knowledge of a quantum state has been extensively studied in the fields of quantum tomography and quantum learning, a physical understanding of its operational role and cost in quantum thermodynamics is lacking. Knowledge is central to thermodynamics, as exemplified by Maxwell's demon thought experiment, where a demonic agent is able to extract paradoxical amounts of work -- reconciled by the thermodynamic costs of acquiring this knowledge. In this work, we address this gap by extending unitary models of measurement to incorporate the resources available to an agent. We view an agent's knowledge of a quantum state as their ability to reconstruct it unitarily given access to states with partial knowledge of the \textit{true} state. In our model, an agent correlates an unknown $d$-dimensional system, with copies of a $k$-dimensional probe ($k\le d$), which are then used to unitarily reconstruct an \textit{estimate} state in $d$-dimensional memories. We find that this framework is a unitary representation of coarse-grained POVMs. As an application, we investigate the role of knowledge in an extended Szilard Engine scenario.
\end{abstract}

\maketitle

\section{Introduction \label{Sec:introduction}}

In classical thermodynamics~\cite{fermi, callen}, knowledge is at the heart of an agent's ability to manipulate a system. An agent controlling an ideal gas within a piston, while having access only to coarse-grained quantities such as pressure and volume, faces limitations in their capacity to transform the system. For instance, extracting less work compared to Maxwell's demon~\cite{maxwell1867, vedral_review,junior2025friendlyguideexorcisingmaxwells}. Indeed, if the agent possesses complete knowledge of both the position and velocity of each particle in the gas, they can extract work seemingly for free. However, when the costs associated with acquiring, storing, and erasing the knowledge of each particle are considered, this ``\textit{seemingly free work}" is revealed to come at a price~\cite{landauer1961, bennett1982}. This classical lesson, that knowledge has operational and thermodynamic costs, motivates our exploration of a quantum treatment.

In quantum thermodynamics~\cite{Goold_2016,binder2019thermodynamics}, the operational role of knowledge is yet to be clarified. Firstly, from an operational point of view, it is not clear how the knowledge an agent possesses limits their ability to achieve tasks such as extracting work or erasing information. Secondly, from a foundational perspective, it is not clear what the cost of being able to reconstruct a thermal state is with access to some quantum resources such as correlating unitaries and a probe system of a given dimension. This is of particular importance when considering \textit{agentic} questions in thermodynamics as we must take into account the cost for an agent acquiring knowledge of a state before being able to act on it e.g. extracting work from this state.

In this work, we attempt to address these foundational and operational aspects of knowledge in quantum thermodynamics. More precisely, we:
\begin{itemize}
    \item[Sec III] provide a model which quantifies the resources required (probe dimension, correlating unitaries) to acquire knowledge by recasting state estimation as a fully unitary, thermodynamically consistent protocol.  
    \item[Sec IV] Explore how with access to multiple rounds of interaction an agent can acquire multiple states which approximate the state of the system and improve the quality of their knowledge by acting on these estimate states.
    \item[Sec  V] Investigate how an agent’s knowledge (as constrained by probe dimension and allowed unitaries) limits their ability to perform thermodynamic tasks. As a first example we consider an extended Szilard engine model where the agent's knowledge of particle's position limits their ability to extract work.  
\end{itemize}   

The cost of acquiring (classical) knowledge of a quantum state has been extensively studied in the emerging field of quantum state learning~\cite{Anshu2024}, as well as in the contexts of quantum tomography~\cite{keyl_2001,harrow_17,odonnell_tomography} and state estimation~\cite{popescu_massar_95}. In this work, we approach this question from an altogether new operational perspective by formulating state estimation in a unitary and thermodynamically consistent setting. While we focus on reconstructing (diagonal) thermal states, the methods we develop are applicable to reconstructing the diagonal of any quantum state in a fixed basis. Due to the no-cloning theorem, when working on single copy input states, one cannot expect to reconstruct more than the diagonal distribution in a specific basis in single-shot. Quantum tomography~\cite{d2003quantum} and quantum learning~\cite{Anshu2024} provide a sample-complexity theoretic understanding of how hard it is to acquire a classical representation of a quantum state. However, they do not address the resources an agent must make use of to make use of this classical representation to reconstruct the quantum state on a quantum memory. In this work we posit that an agent’s knowledge of a quantum state reflects their ability to unitarily reconstruct the system’s diagonal (in a desired basis) in a memory system they have access to, given specified resources such as probe dimension and allowed unitaries.

\begin{figure*}[t]
     \centering
     \includegraphics[width = \linewidth]{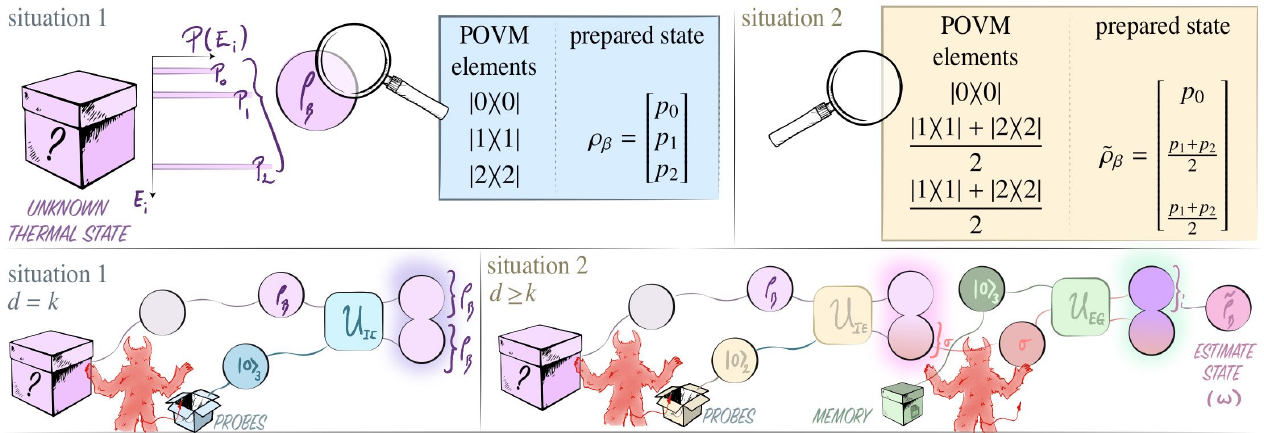}     
     \caption{\emph{Summary}. The main message of this work is the reformulation of a scenario where an agent with access to only limited POVMs attempts to estimate the state of an unknown system, as a unitary protocol allowing us to connect to thermodynamics. Using lower-dimensional probes, an agent creates correlations between their probe and the system to extract information, which is then used to reconstruct the unknown state in a quantum memory. This creates a contrast between two exemplary situations in state reconstruction. Situation~\hyperref[B:motivating-example]{1}, the agent has access to full POVM elements and is able to perfectly reconstruct a thermal state. Situation~\hyperref[B:motivating-example]{2}, the agent has access to coarse-grained POVMs, allowing them to only partially distinguish between certain energy levels of the system, resulting in an approximate reconstruction of the state. The focus of this work is to study how limited resources such as those in Situation~\hyperref[B:motivating-example]{2} impede an agent's ability to reconstruct the state of an unknown system.}
     \label{fig:summary}
 \end{figure*}

 Formally, we approach this general problem of acquiring knowledge of a quantum state by considering a simple setup. A $d$-dimensional quantum system is prepared in an unknown thermal state. The agent, with access to a $k \leq d$ dimensional quantum system, referred to as the probe system, unitarily couples the probe with the unknown state, creating correlations. Using the information captured in the probe, the agent unitarily reconstructs an estimated state of the system in a $d$-dimensional quantum memory. The agent is then allowed to perform unitaries on the obtained estimates to improve their ability to faithfully represent the state.  Fascinatingly, we observe that this dimensionally-constrained unitary protocol directly corresponds to \textit{measure-and-prepare} channels involving coarse-grained POVMs. Thereby building a bridge from thermodynamics to state estimation as exemplified in Fig.~\ref{fig:summary}. 


\section{Setting the scene \label{S:setting-the-scene}}
We consider a $d$-dimensional system $\ms{S}$, described by a Hamiltonian $H_{\ms{S}}$, prepared in an unknown thermal state $\rho_\beta = e^{-\beta H_{\ms{S}}}/\tr(e^{-\beta H_{\ms{S}}})$ at some, likewise unknown, inverse temperature $\beta = 1/k_B T$, where $k_B$ is the Boltzmann constant. Since this state, and subsequent ones, will be diagonal in the energy eigenbasis, we will adopt a shorthand notation. Specifically, when a density matrix is diagonal in a particular basis (e.g., the energy eigenbasis from which we are extracting information), we will represent the matrix by its diagonal elements as a vector. That is, we write $\rho_{\beta}:=(p_0, \ldots, p_{d-1})$, where the $p_i$ are the diagonal elements of the density matrix. For both the probe and the memory, we represent its state together with its dimension by $\ket{a}_x$ with the index $x$ indicating the dimension.

Knowledge of a state is reflected in an agent's ability to reconstruct it. A canonical approach to this setting involves defining the so-called \textit{measure-and-prepare} (MP) channel $\mathcal{MP}(\cdot)$~\cite{watrous2018theory}, which allows one to frame the task of state estimation as measuring $n$ copies of an unknown state and then preparing other states that serve as \emph{estimates} of this unknown state. Formally, these channels are defined by a set of POVM elements $M_i$ and quantum states $\omega_i$, such that the channel acts on an input state $\rho$ as follows $ \mathcal{MP}(\rho) = \sum_i \omega_i \tr\{M_i \rho\}$. MP channels can be made more physical by relating them to the von Neumann measurement scheme~\cite{peres1995quantum} via a unitary. For an MP channel $\mathcal{MP}(\rho)$, we construct a correlating global unitary that, depending on the state of $\rho$, rotates an ancillary state $\nu_A$ to $\omega_i$. Mathematically, we consider the dilation:
 $\mathcal{MP}(\rho) = \tr_{\ms S}\left\{\sum_i M_i \otimes U_i (\rho \otimes \nu_A) M_i \otimes U^\dagger_i\right\}$. 

Let us discuss two examples to illustrate the estimation of thermal states through a correlating unitary process~(see Fig.~\ref{fig:summary} for a schematic representation of the protocol).

\begin{mybox}{A. Motivating example: full control\label{B:motivating-example}}
Suppose that an agent wants to estimate the state of a qutrit, which is prepared in an (unknown) thermal state $\rho_\beta= (p_0, p_1, p_2)$. If they can perfectly distinguish all three levels with access to POVM elements $M_i = \{\ketbra{0}{0},\ketbra{1}{1},\ketbra{2}{2}\}$, then they are able to estimate the state perfectly with access to a channel which produces energy eigenket $\ket{i}$ whenever $\rho_\beta$ is found in $\ket{i}$.
 \end{mybox}

When the agent's ability to distinguish energy levels is restricted a different scenario occurs.
 
 \begin{mybox}{B. Motivating example: coarse control\label{B:motivating-example-2}}
Suppose now that the agent cannot distinguish the second and third energy levels and has access to coarse grained POVM elements $M'_i = \{\ketbra{0}{0},\frac{\ketbra{1}{1} + \ketbra{2}{2}}{2},\frac{\ketbra{1}{1} + \ketbra{2}{2}}{2}\}$. Differently from the motivating example ~\hyperref[B:motivating-example]{A}, they would correctly prepare $\ket{0}$, but would mix up when to correctly prepare $\ket{1}$ and $\ket{2}$. Consequently, this leads to the estimate state $\widetilde{\rho}_\beta= (p_0, \frac{p_1 + p_2}{2}, \frac{p_1 + p_2}{2})$.
 \end{mybox}

The situation described in the above example physically corresponds to cloning a quantum state in a single basis, which is possible when given a state of equal dimension and equal or greater purity~\cite{scarani2005quantum}. In the present case, a thermal state is diagonal in the energy eigenbasis, so it can be cloned with access to an eigenstate $\ket{q}$ of the Hamiltonian. For example, one can construct a unitary $V$ that, conditioned on the populations of $\rho_\beta$, rotates $\ket{q}$ into different energy eigenstates, leaving $\rho_\beta$ invariant and producing a copy given by $\tr_{\ms S}\{V(\rho_\beta \otimes \ketbra{q}{q})V^\dagger\} = \rho_\beta$, where $V$ is given by
\begin{align}
V =\sum^{d-1}_{j = 0, \, i\neq j} \ketbra{i}{i} \otimes S_{q,i},
\end{align}
with $S_{q,i} ~:=~\sum_{j \neq \{q,i\}} \ketbra{i}{q} + \ketbra{q}{i} + \ketbra{j}{j}$ representing a permutation or a two-level SWAP gate. Conditioned on the unknown system being in state $\ket{i}$, the two-level SWAP gate rotates $\ket{q}$ to $\ket{i}$, effectively cloning the state to the second system. This is possible because the agent has access to a $d$-level measurement probe, which has a dimension equal to that of the system being investigated. This situation is characterized by an agent having \textit{full control}. Or analogously in the state estimation picture, they have access to POVM elements that fully distinguish between each energy level. What if the agent only had access to coarse grained POVM elements? 

As we will show, situations with coarse-distinguishability, such as the second example illustrated in Box~\hyperref[B:motivating-example-2]{B} can be reformulated as a unitary protocol in which an agent attempts to extract information from the system by building correlations with a limited probe system of lower dimension than the system. In Sec.~\ref{S:information-extraction-estimate-generation}, we describe this protocol as a two-step unitary process involving a lower-dimensional probe and an equal-dimensional memory system. In the first step, termed \emph{information extraction} (IE), the low-dimensional probe is correlated with the system of interest using controlled-SWAP unitaries, such that the probe populations are partial sums of the system populations. In the example involving a three-level system, this is accomplished by the following unitary operation:
\begin{equation}
    U_{\text{IE}} =\ketbra{0}{0} \otimes \ketbra{0}{0} \!+\! (\ketbra{1}{1} \!+\! \ketbra{2}{2}) \otimes \sigma_x + \mathbb{1}_{\text{Rest}},
\end{equation}
where in this case $\mathbb{1}_\text{Rest} = \ketbra{01}{01}$ and in general acts as an identity on the untouched subspace. This leads to the following marginal state for the probe
\begin{equation}\label{Eq:example-state}
    \sigma:=\tr_{\ms S}\{U_{\text{IE}}(\rho_{\beta} \otimes \ketbra{0}{0}_2)U_{\text{IE}}^\dagger \},
\end{equation}
where $\sigma = (p_0, p_1 + p_2)$. 

Next, in the \emph{estimate generation} (EG) step, the information captured by the probe is equally distributed following a least assumptions principle in a $d$-dimensional memory system. This is performed by two-level SWAP and quantum Fourier transform operations which ensure that the information is spread correctly and that no quantum coherences accidentally end up in the memory system. 

For instance, consider a probe prepared in the state $\sigma$, given by Eq.~\eqref{Eq:example-state}, coupled to a three-dimensional memory initially in the state $\ket{0}_3$, such that the composite state of the system is $\sigma \otimes \ketbra{0}{0}_3$. By applying a SWAP in the subspace ${\ket{10}, \ket{01}}$, the population of the joint system is mapped to $(p_0, p_1 + p_2, 0, 0, 0, 0)$. Next, applying a Hadamard operation in the subspace spanned by $\ket{01}$ and $\ket{12}$ transforms the state to $p_0 \ketbra{00}{00} + \frac{p_1 + p_2}{2} (\ket{01} + \ket{12})(\bra{01} + \bra{12})$. Finally, tracing out the probe results in the \emph{estimate state} $\omega_0 = \left(p_0, \frac{p_1 + p_2}{2}, \frac{p_1 + p_2}{2}\right)$.
    
Observe that in this protocol, the agent has taken the information obtained from a restricted coupling with their low-dimensional probe and spread it in an agnostic way (relative to their restricted coupling) across a memory system, creating an estimate state. Clearly, different couplings with the lower-dimensional probes generate different estimates. For example, the agent has access to $\binom{3}{2,1} = 3$ possible estimates where $\binom{n}{k_1,k_2} = \frac{n!}{k_1!k_2!}$ is a multinomial coefficient, being them $\omega_1 = \left(\frac{p_0 + p_2}{2}, p_1, \frac{p_0+p_2}{2}\right)$ or $\omega_2 = \left(\frac{p_0+p_1}{2}, \frac{p_0+p_1}{2}, p_2\right)$.

This raises several questions that we address in this work:
\begin{enumerate}
    \item \textit{How does the probe dimension impede reconstruction?} 
    \item \textit{Can globally manipulating multiple estimates allow us to obtain a better reconstruction of this state?}
    \item \textit{How does this partial knowledge constrain an agent's ability to carry out tasks such as work extraction or state transformation?}
\end{enumerate}
In the coming sections, we properly define these questions and shed light on their answers.

\section{Information Extraction \& Estimate Generation \label{S:information-extraction-estimate-generation}}

In the previous section, we discussed the essence of the coarse-grained scenario. We will now formalise this idea by presenting in detail the information extraction and estimate generation steps while introducing the notation.

\subsection{Information Extraction}
If the agent has access to $k < d$ dimensional measurement probes they are only able to couple to the $d$ energy levels by grouping them into $k$ partitions. As a result, they are unable to distinguish between energy levels within a given partition, which impairs their ability to accurately copy the state $\rho_\beta$. Let us denote these partitions as $\mathcal{P}$, where $\cup^{k-1}_{i = 0} \mathcal{P}_i = \{E_d\}, \, \cap^{k-1}_{i = 0} \mathcal{P}_i = \emptyset $ and $\{E_d\}$ represents the energy levels of $H_{\ms S}$. We denote the sums of populations of $\rho_\beta$ with respect to these partitions as 
\begin{gather}
   \overline{p}_i:= \sum_j \bra{j} \rho_\beta \ket{j} \, : \, \ket{j} \in \mathcal{P}_i.
\end{gather}
With access to a $k$-level probe in a pure state $\ket{r}$, the agent can then apply the unitary
\begin{align}
    \hspace{-0.2cm}U_{\text{IE}} &\!=\! \sum_{i_0 \in \mathcal{P}_0} \! \ketbra{i_0}{i_0}\! \otimes\! S_{r,0} + \!\dots\!   \nonumber \\ &\hspace{2cm}\!\dots\!+\hspace{-0.5cm}\sum_{i_{k-1} \in \mathcal{P}_{k-1}} \hspace{-0.5cm}\ketbra{i_{k-1}}{i_{k-1}} \!\otimes\! S_{r,k-1}+ \mathbb{1}_{\text{Rest}}\nonumber \\
    &= \sum^{k-1}_{j = 0} \ketbra{i_j}{i_j} \otimes S_{r,j} + \mathbb{1}_{\text{Rest}} \, : \, \forall \, \ket{i_j} \in \mathcal{P}_j ,\label{eq:coupling}
\end{align}
which produces the probe state
\begin{align}
    \sigma = \tr\{U_{\text{IE}}(\rho_\beta \otimes \ketbra{r}{r})U_{\text{IE}}^\dagger\} = (\bar{p}_0,  \bar{p}_1, \dots  \bar{p}_{k-1})
\end{align}
where the agent has acquired a state which involves $k$ partial sums of the $d$ populations making up $\rho_\beta$. Note that the coupling $U_{\text{IE}}$ as defined in Eq.~\eqref{eq:coupling}, does not perturb the unknown state of the system, allowing the agent to couple multiple probes sequentially to the system. By choosing different controlled rotations in each round, the agent can obtain different information about the system without disturbing it.

\paragraph*{Different Partitions \& Different Measurements.}  With access to a given $k$-level probe, an agent may choose to couple it with the $d$-level system they wish to learn in different ways, corresponding to various levels of control. The number of possible couplings can be determined by counting the number of $k$-tuples, where each entry in the tuple can take one of $d$ possible symbols, with each symbol representing a specific partition. However, we exclude cases where all entries are filled with the same symbol, of which there are $k$, as well as account for the permutations. Consequently, the total number of distinct couplings is given by $\frac{k^{d}-k}{k!} = \frac{k^{d-1}-1}{(k-1)!}$.

We may ask how many couplings result in partitions of the same size, where the energy levels of the system are coupled non-equivalently to the energy levels of the probe. These couplings represent different measurement outcomes with the same level of control. We refer to these couplings with partitions of the same size as a \textit{measurement setting}, which we represent by the vector $\vec{t} = [t_0, t_1, \ldots, t_{k-1}]$. Here $t_i$ denotes the number of times the symbol $i$ appears in the $k$-tuple and so the number of energy levels in the $i$th partition, $|\mathcal{P}_i| = t_i$. Note that the sum of the elements of $\vec{t}$ must always equal $d$, $\sum^{k-1}_{i = 0} t_i = d$ to correspond to the $d$ populations of $\ms S$. Therefore the no. of vectors $\vec{t}$ corresponds to the number of possible $k$-tuples of a given \textit{type} i.e. those that sum to $d$. In this way the number of \textit{measurement outcomes, probes, and estimates} for a given \textit{measurement setting} is
\begin{equation}
    m:= \frac{1}{r!}\binom{d}{\vec t}=\frac{1}{r!}\binom{d}{t_0, t_1, \dots t_{k-1}}
\end{equation}
where $r$ is the number of repetitions in $\vec t$. We will denote the set of possible estimate $\omega_i$ for a given measurement setting using the set $\Omega$.

To make this more concrete, consider an agent attempting to learn the thermal state of a 4-level system with diagonal elements $(p_0, p_1, p_2, p_3)$ using a qubit probe. There are two possible measurement settings. The first setting corresponds to $\binom{4}{1,3}$, yielding the probe states $(p_0, p_1 + p_2 + p_3)$, $(p_1, p_0 + p_2 + p_3)$, $(p_2, p_0 + p_1 + p_3)$, and $(p_3, p_0 + p_1 + p_2)$, which represent the four possible measurement outcomes for $d = 4$ and $\vec{t} = [1,3]$. The second setting, corresponding to less control, mixes more energy levels together, resulting in three possible measurement outcomes with the probe states $(p_0 + p_1, p_2 + p_3)$, $(p_0 + p_2, p_1 + p_3)$, and $(p_0 + p_3, p_1 + p_2)$, for $d = 4$ and $\vec{t} = [2,2]$.

Finally, the information extraction protocol, establishes a way to transfer information from an unknown system to a lower-dimensional probe system in a manner that is ignorant of the unknown state, using controlled-SWAP unitaries. Let us now investigate the task of spreading this information throughout a higher-dimensional memory system without creating coherences in the memory, to be faithful to the initial thermal state.

\subsection{Estimate Generation}
\label{sec:est_gen_protocol}
We will call the process of taking the information from a $k$-dimensional measurement probe and promoting it to a $d$-dimensional system in a way that spreads the information equally with respect to the priors given by the chosen coupling, \textit{estimate generation}. Here the agent takes the $k$ level measurement probe $\sigma$ which now contains information about the populations of $\rho_\beta$, and unitarily spreads that information across a $d$ level \textit{memory} system in a way which respects \textit{Jaynes’ principle}~\cite{jaynes1957informationI}. To begin with, we will consider the memory to be in an energy eigenstate of $H_{\ms S}$) . That is, if the state of the measurement probe is $\sigma = (\overline{p}_0,\overline{p}_1,\dots, \overline{p}_{k-1})$ then unitarily correlating it w.l.o.g. with a memory in state $\ket{0}_d$, we expect to obtain a marginal state for the memory of the form \begin{gather}
\hspace{-0.3cm}\omega \!=\! \underbrace{\left(\colorbox{blue!30}{$\frac{\overline{p}_0}{t_0},\dots, \frac{\overline{p}_0}{t_0}$}\right.}_{t_0},
\underbrace{\colorbox{red!30}{$\frac{\overline{p}_1}{t_1},\dots \frac{\overline{p}_1}{t_1}$}}_{t_1},\dots, \underbrace{\left.\colorbox{orange!30}{$\frac{\overline{p}_{k-1}}{t_{k-1}},\dots,\frac{\overline{p}_{k-1}}{t_{k-1}}$}\right)}_{t_{k-1}}. \label{eq:omega}\end{gather}
This is a $d$ level diagonal state in the energy eigenbasis of $H_{\ms S}$, which we call an \textit{estimate}. Recall that $\overline{p}_i$ is the sum of the populations of $\rho_\beta$ corresponding to the partition $\mathcal{P}_i$, which contains $t_i$ levels. Therefore, we can see that $\omega$ is composed of blocks of size $t_i$, with each entry in the block equal to the average population of $\mathcal{P}_i$, given by $\frac{\overline{p}_i}{t_i}$. 

\section{Estimate Manipulation \label{S:estimate-manipulation}}

\begin{figure*}[t]
        \centering
	\includegraphics{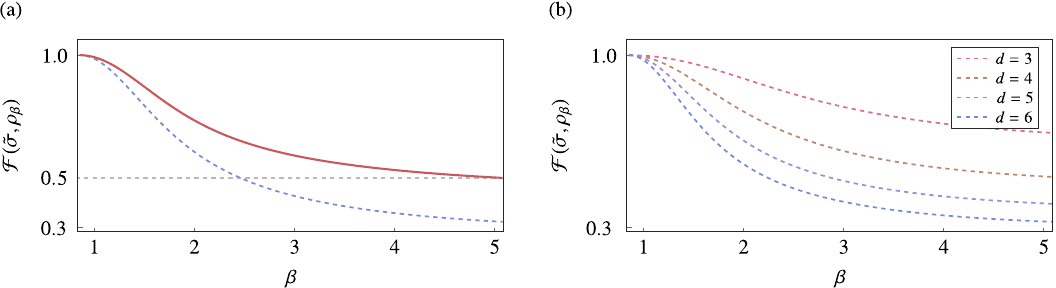}
	\caption{\justifying{ 
 \emph{Fidelity \& measurement settings}. For a thermal state $\rho_{\beta}$ described by an equidistant energy spectrum, we plot the fidelity as a function of $\beta$ for (a) two measurement settings $\vec{t} = [1,2,3]$ (red solid curve) using qutrit probes and $\vec{t}_2 = [1,5]$ (dashed blue curve) using qubit probes. We observe that the former is superior as the higher dimensionality allows the agent to couple less levels of the unknown system to more levels of the probe system. Focusing on qubit probes (b) for the general measurement $\vec{t}  = [1, d - 1]$, we notice that the fidelity decreases with the dimension of the estimated thermal state. In this case, the information is spread over the entries of the estimate, making it challenging to reconstruct $\rho_{\beta}$.}}
\label{F-fidelity}
\end{figure*}

In a given measurement setting $\vec{t}$, which describes how the agent's $k$-level probes can couple to the $d$-dimensional system $\rho_\beta$ to extract information, the agent can obtain $m$ estimate states $\omega$, each containing coarse-grained information about $\rho_\beta$. Here, we will examine how the agent can manipulate the set of estimate states they have generated to approximate the target state $\rho_\beta$ as closely as possible. Specifically, for a $d$-level system in the state $\rho_\beta$ which coupled to $k$-level probes, the agent generates a set of $m$ estimate states ${\omega}$ in quantum memories. We are interested in the fidelity between each estimate state and the target probes
\begin{gather}
    \mathcal{F}(\rho_\beta , \omega_i) \, : \,  \omega_i \in \Omega.
\end{gather}
And whether, by manipulating the $m$ estimates through a unitary process $V$, we can produce a state $\widetilde{\omega}$ with a fidelity
\begin{gather}
    \mathcal{F}(\rho_\beta , \widetilde{\omega}) \text{ where } \widetilde{\omega} = \tr_{m-1}\Biggl\{V \bigotimes_{ \omega_i \in \Omega} \omega_i V^\dagger\Biggl\},
\end{gather}
which is greater than that of the individually obtained estimates. For brevity, we will use the notation $\overline{\omega} = \bigotimes_{\omega_i \in \Omega} \omega_i$ to represent the joint state of all the estimates. 
\subsection{Symmetric -- Knowledge Distribution}
A scenario of particular interest is the symmetrisation of the $m$ estimated states, which allows an agent to spread the information contained in each estimate uniformly. This process results in $m$ identical processed states, denoted as $\tilde{\omega}_s$, which can then be used to approximate $\rho_{\beta}$ for a given task. To achieve this, we project the state $\overline \omega$ onto the symmetric subspace using the symmetrisation projector $\Pi_n = \frac{1}{n!}\sum_{\pi \in \mathcal{S}_n} R_\pi$, with
\begin{align}
    R_\pi = \sum^{d}_{i_1,\dots,i_n = 1} \ketbra{i_{\pi(1)}\dots i_{\pi(n)}}{i_1 \dots i_n} ,
\end{align}
where $R_\pi$ is the unitary representation of the permutation $\pi \in \mathcal{S}_n$ and $\mathcal{S}_n$ is the group of permutations of $n$ objects.
\begin{proposition}
The symmetrised estimate state $\widetilde{\omega}_s$ is the mean over all $m$ estimate states $\Omega$,
$$\widetilde{\omega}_s = \frac{1}{m} \sum_{\omega_i \in \Omega} \omega_i.$$
\end{proposition}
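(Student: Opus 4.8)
The plan is to treat the symmetrisation as the permutation twirl generated by the $R_\pi$, and then to compute the reduced state on a single memory. Symmetrising the joint estimate means sending $\overline{\omega}$ to the permutation-invariant state $\frac{1}{m!}\sum_{\pi\in\mathcal{S}_m} R_\pi\,\overline{\omega}\,R_\pi^\dagger$, i.e.\ averaging it over the orbit of the group action rather than merely sandwiching it between copies of $\Pi_m$. Because this averaged state commutes with every $R_\sigma$, it is invariant under relabelling the $m$ memories, so its marginals on the $m$ factors are all equal; this is precisely what justifies speaking of $m$ identical processed states and lets me define $\widetilde{\omega}_s$ as the common single-site marginal, $\widetilde{\omega}_s = \text{tr}_{m-1}\{\frac{1}{m!}\sum_{\pi\in\mathcal{S}_m} R_\pi\,\overline{\omega}\,R_\pi^\dagger\}$.

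First I would use the product structure $\overline{\omega} = \bigotimes_{j}\omega_j$. Since $R_\pi$ only relabels tensor factors, conjugation gives $R_\pi\,\overline{\omega}\,R_\pi^\dagger = \bigotimes_{j}\omega_{\pi^{-1}(j)}$, a product state whose $j$-th slot carries the estimate $\omega_{\pi^{-1}(j)}$. Tracing out every factor but the first then collapses each such term to its first slot, so $\text{tr}_{m-1}\{R_\pi\,\overline{\omega}\,R_\pi^\dagger\} = \omega_{\pi^{-1}(1)}$, and therefore $\widetilde{\omega}_s = \frac{1}{m!}\sum_{\pi\in\mathcal{S}_m}\omega_{\pi^{-1}(1)}$.

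The remaining step is a short counting argument. I would partition the sum over $\pi$ according to the value $j=\pi^{-1}(1)$: for each fixed $j\in\{1,\dots,m\}$ the permutations with $\pi(j)=1$ number exactly $(m-1)!$, since the images of the other $m-1$ indices can be chosen freely. Collecting terms yields $\widetilde{\omega}_s = \frac{(m-1)!}{m!}\sum_{j=1}^{m}\omega_j = \frac{1}{m}\sum_{\omega_i\in\Omega}\omega_i$, which is the claim.

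I expect the main obstacle to be conceptual rather than computational: pinning down that the symmetrising map which yields equal marginals is the full twirl, and not the one-sided compression $\Pi_m\,\overline{\omega}\,\Pi_m$, whose normalised marginal does not in general reproduce the average (already for two distinct single-qubit estimates). Once the map is fixed, permutation invariance makes $\widetilde{\omega}_s$ well defined and the only genuine bookkeeping is the index reshuffle $\pi\mapsto\pi^{-1}(1)$ together with the factorial count; both are routine.
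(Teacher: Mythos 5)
Your proof is correct, but it takes a genuinely different route from the paper's. The paper works with the one-sided compression $\tr_{m-1}\{\Pi_m \overline{\omega}\, \Pi_m\}$, ``fixes one site'' via a claimed decomposition $\Pi_m \overline{\omega}\,\Pi_m = \frac{1}{m}\sum_i \omega_i \otimes \Pi_{m-1}\bigl(\bigotimes_{j\neq i}\omega_j\bigr)\Pi_{m-1}$, and then asserts that each projected remainder has unit trace. You instead declare the symmetrisation to be the full twirl $\frac{1}{m!}\sum_{\pi\in\mathcal{S}_m}R_\pi \overline{\omega} R_\pi^\dagger$, note that conjugating a product state by $R_\pi$ merely permutes its tensor factors, and close with the count that each $\omega_j$ occupies the first slot for exactly $(m-1)!$ permutations. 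What your route buys is rigour: the twirl is trace preserving and its output commutes with every $R_\sigma$, so the common single-site marginal is well defined and equals the mean exactly. The paper's route is shorter but leans on steps that do not hold literally: the displayed decomposition of $\Pi_m\overline{\omega}\,\Pi_m$ is not an identity (it drops the cross terms in the expansion of $\Pi_m$ on both sides), and $\tr\{\Pi_{m-1}(\bigotimes_{j\neq i}\omega_j)\Pi_{m-1}\}$ is the weight of that product state on the symmetric subspace, which is strictly less than one for generic mixed estimates, so the compressed state is sub-normalised and its marginal is not in general the stated mean. Your closing caveat is therefore well founded, not merely conceptual bookkeeping: already for $m=2$ and commuting estimates one finds $\tr_1\{\Pi_2(\omega_1\otimes\omega_2)\Pi_2\} = \frac{1}{4}(\omega_1+\omega_2) + \frac{1}{2}\omega_1\omega_2$, which differs from $\frac{1}{2}(\omega_1+\omega_2)$ even after renormalisation (e.g.\ for $\omega_1 = \diag(0.9,0.1)$, $\omega_2 = \diag(0.5,0.5)$). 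Identifying the twirl, rather than the projector sandwich, as the map realising ``symmetrisation'' is exactly what makes the proposition true as stated, and it is the reading under which the paper's subsequent uses of the result (the $\vec{t}=[1,d-1]$ and $\vec{t}=[d/2,d/2]$ formulas) remain consistent.
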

\textit{Proof.} Let us consider the symmetrisation of $m$ estimates 
\begin{gather}
   \widetilde{\omega}_s = \tr_{m-1}\{\Pi_m \bigotimes_{ \omega_i \in \Omega} \omega_i \Pi_m\}
\end{gather}
where the trace can be taken over any of the $m-1$ states to obtain the same reduced state, owing to the symmetry of $\Pi_m$. Now, we fix one site
\begin{align}
\widetilde{\omega}_s&=\tr_{m-1}\{\frac{1}{m}\sum_{\omega_i \in \Omega} \omega_i \otimes \Pi_{m-1} \bigotimes_{\substack{ \omega_j \in \Omega \\ \omega_j \neq \omega_i}} \omega_j \Pi_{m-1}\} 
\intertext{taking the sum over its possible values and symmetrising over the remaining $m - 1$ estimates}
&=\frac{1}{m} \sum_{\omega_i \in \Omega} \omega_i \tr_{m-1}\{\Pi_{m-1} \bigotimes_{\substack{ \omega_j \in \Omega \\ \omega_j \neq \omega_i}} \omega_j \Pi_{m-1}\}
\intertext{which, when projected into the symmetric subspace, form a valid quantum state with trace one, resulting in}
&= \frac{1}{m} \sum_{\omega_i \in \Omega} \omega_i. \hspace{0.5cm} \Box \,\, 
\end{align}
In this way, through symmetrisation, we can take $m$ estimates with different fidelities and convert them into $m$ identical states $\widetilde{\omega}_s$, each having the same fidelity with $\rho_{\beta}$ which is greater or equal to the average fidelity of the individual estimates.

\begin{corollary}
    The fidelity of the symmetrised estimate state $\widetilde{\omega}_s$ to the \textit{true} state $\rho_\beta$ is greater or equal to the average of the fidelities of the individual estimate states $\omega$,
    $$\mathcal{F}(\rho_\beta, \widetilde{\omega}_s) \geq \frac{1}{m} \sum_{\omega_i \in \Omega} \mathcal{F}(\rho_\beta, \omega_i).$$
\end{corollary}
\textit{Proof.} This follows directly from the prior proposition and the concavity of the fidelity i.e. $\mathcal{F}(\sigma,\sum_{i} p_i \rho_i) \geq \sum_i p_i \mathcal{F}(\rho_i, \sigma)$.
\begin{align}
    \mathcal{F}(\rho_\beta, \widetilde{\omega}_s) &= \mathcal{F}\left(\rho_\beta, \sum_{\omega_i \in \Omega}\frac{1}{m} \omega_i\right) \\
    &\geq \frac{1}{m} \sum_{\omega_i \in \Omega}\mathcal{F}\left(\rho_\beta, \omega_i\right)  \hspace{0.5cm} \Box \,\,
\end{align}
 This raises the question: under what conditions is the fidelity of the symmetrised estimate greater than the maximum fidelity of the individual estimates? To explore this, let us consider some examples, which we can compare to numerically tractable estimate states.

Let us examine how closely an agent, with access only to qubit probes, can reconstruct the state of an unknown $d$-level system in a thermal state after multiple rounds of our protocol. It is instructive to consider the extremal cases of possible measurement settings. Specifically, when $\vec{t} = [1, d-1]$, one specific level is coupled to the ground state of the qubit, while the remaining levels are coupled to the excited state. In contrast, for $\vec{t} = [d/2, d/2]$, half of the levels are coupled to each of the two states of the qubit. Starting with the former case, we find that the corresponding estimate states take the form
\begin{equation}
    \!\!\!\!\omega_0 \!=\!  \!\begin{bmatrix}
p_0 \\
\frac{1-p_0}{d-1}
 \\
 \vdots
 \\
 \frac{1-p_0}{d-1}
\end{bmatrix} \,  ,  \, \omega_1 \!=\! \! \begin{bmatrix}
\frac{1-p_1}{d-1} \\
p_1
 \\
 \vdots
 \\
\frac{1-p_1}{d-1}
\end{bmatrix}, \, \,  \!\hdots \!\, \, , \omega_{d-1} \!=\! \! \begin{bmatrix}
\frac{1-p_{d-1}}{d-1} \\
\frac{1-p_{d-1}}{d-1}
 \\
 \vdots
 \\
p_{d-1}
\end{bmatrix}.
\end{equation}
Since the symmetrised estimate state $\tilde{\omega}_{[1,d-1]}$ is the mean over all $d-1$ estimate states, we find that
\begin{align}
\tilde{\omega}_{[1,d-1]} &= \frac{1}{d}(\omega_0 + ... + \omega_{d-1}) \nonumber \\
&= \frac{1}{d} \qty{\begin{bmatrix}
p_0 \\
p_1
 \\
 \vdots
 \\
p_{d-1}
\end{bmatrix}+ \frac{1}{d-1}\begin{bmatrix}
p_0 + (d-2) \\
p_1 + (d-2)
 \\
 \vdots
 \\
p_{d-1} + (d-2)
\end{bmatrix}} \nonumber \\
&= \frac{1}{d-1}\qty[\rho_{\beta}+\frac{(d-2)}{d}\mathbb{1}_d]. \label{Eq:symmetrised-estimate-1d}
\end{align}

From Eq.~\eqref{Eq:symmetrised-estimate-1d}, one can easily compute its fidelity $\mathcal{F}(\rho,\sigma) = \qty[\tr(\sqrt{\sqrt{\rho}\sigma\sqrt{\rho}})]^2$ with respect to the target thermal state $\rho_{\beta}$,
\begin{align}
    \!\!\!\mathcal{F}(\rho_\beta,\widetilde{\omega}_{[1,d-1]}) \!=\! \frac{1}{d-1}\left(\sum^{d-1}_{i = 0}\sqrt{p_i\left(p_i + \frac{d-2}{d}\right)}\right)^2 .
\end{align}
Note that in the limiting cases, where $\beta =0$, which implies $\rho_{\beta}\to \frac{\mathbb{1}_d}{d}$, the fidelity becomes $\mathcal{F}(\tilde{\omega},\rho_{\beta}) = 1$; while when $\beta \to \infty$, the fidelity becomes $\frac{2}{d}$. Moving on to the latter coupling, we find in Appendix~\hyperref[sec:d2_d2]{C-1}. that the estimates produced by this coupling when symmetrised will have entries 
\begin{align}
    \bra{i}\widetilde{\omega}_{[\frac{d}{2},\frac{d}{2}]}\ket{i} &= \hspace{-1cm}
        \sum^{d - 1}_{1 \leq \alpha_1 \leq \dots \leq \alpha_{d/2 - 1} \leq d - 1} \hspace{-1cm} p_i + p_{\alpha_1} + \dots + p_{\alpha_{d/2 - 1}}.
\end{align}
The $i$th entry of $\widetilde{\omega}_{[\frac{d}{2},\frac{d}{2}]}$ is a sum of sums involving $p_i$ and a different selection of $d/2 - 1$ populations $p_{\alpha_k}$, where each $p_{\alpha_k}$ cannot be equal to $p_i$. Each individual sum originates from a different estimate. By examining the combinatorics of this sum, we observe that $p_i$, which appears in all contributions, occurs $\binom{d - 1}{d/2 - 1}$ times, while any other population $p_l$ appears $\binom{d - 2}{d/2 - 2}$ times, giving
\begin{align}
 \bra{i}\widetilde{\omega}_{[\frac{d}{2},\frac{d}{2}]}\ket{i} \!=\! \frac{2}{md}\! \left[\! \binom{d-1}{d/2 -1}p_i \!+\! \binom{d-2}{d/2 - 2}\left(\sum^{d-1}_{j =1}p_j\right) \!\right]. \nonumber
\end{align}    
Applying Pascal's identity then gives the form of the symmetrised estimate state, we obtain
\begin{align}
\hspace{-0.3cm}\widetilde{\omega}_{[\frac{d}{2},\frac{d}{2}]} =  \frac{2}{\binom{d-1}{d/2 -1}}\left[ \frac{1}{d}\binom{d-2}{d/2 - 1}\rho_\beta + \binom{d-2}{d/2 - 2}\frac{\mathbb 1}{d}\right]. \label{eq:d2d2}
\end{align}
In Fig.~\ref{F-fidelity}, we plot the fidelity of a symmetrised estimate state with the unknown thermal state for different couplings and dimensions. In panel Fig.~\hyperref[F-fidelity]{\ref{F-fidelity}a}, for a 6-level unknown thermal state, we observe that an agent with access to a qutrit probe outperforms an agent with access to a qubit probe across all temperatures. Additionally, hotter states tend to be easier to reconstruct through symmetrisation. In panel Fig.~\hyperref[F-fidelity]{\ref{F-fidelity}b}, we examine how the performance of a fixed coupling type varies with increasing dimension of the unknown system, noting that as the dimension of the unknown system increases, it becomes more difficult to achieve a faithful reconstruction.

It is interesting to note that symmetrised estimate states bear a striking resemblance to optimal clones in the context of unitary operations that take $n$ copies of a pure state $\ket{\psi}$ and $m$ copies of the $\ket{0}$ state to form $m + n$ clones that are as close as possible to the original state. In these works, it was found that symmetrisation maps function as optimal cloners~\cite{gisin_massar_97,werner_98, scarani2005quantum} and result in clones of the form $\rho = \gamma \ketbra{\psi}{\psi} + (1-\gamma)\frac{\mathbb 1}{d}$ where $\gamma$ is the so-called Shrinking or Black-Cow factor~\cite{werner_98,bruss_ekert_98}. Optimal quantum cloning has a rich relationship with state estimation~\cite{bruss_ekert_98,Bruss_1999,acin_06}, so it is perhaps not too surprising that our investigation into estimating thermal states has revealed this link. Indeed, our symmetrisation protocol can be viewed as a form of quantum cloning, where each copy has lost some information in a consistent yet distinct way. Specifically, each copy undergoes a coarse-graining of the same type, but applied to different parts of its spectrum. 

\subsection{Asymmetric -- Knowledge concentration}
\label{sec:know_conc}
Given a $d$-dimensional system and a $k$-dimensional pure state probe, coupled according to the measurement setting $\vec{t}$, the agent obtains $m$ estimate states through the correlation of $m$ probes with the system, as described in Sec.~\ref{sec:est_gen_protocol}. While the preceding subsection focused on symmetrically spreading the acquired information across all $m$ estimates to improve them uniformly, we now turn to the asymmetric case. The question we now pose is: \textit{How close can one bring a single estimate to the target state by applying a global operation on the $m$ estimates, sacrificing $m - 1$ of them in the process?} Specifically, does there exist an operation $V$ such that 
\begin{gather}
    \widetilde{\omega} = \tr_{1,\dots,m-1}\{V \bigotimes_{ \omega_i \in \Omega} \omega_i V^\dagger\} \stackrel{?}{=} \rho_\beta.
\end{gather}
We dub this the task of \textit{knowledge concentration}.

\begin{figure}[t]
    \centering
    \includegraphics[width = \linewidth]{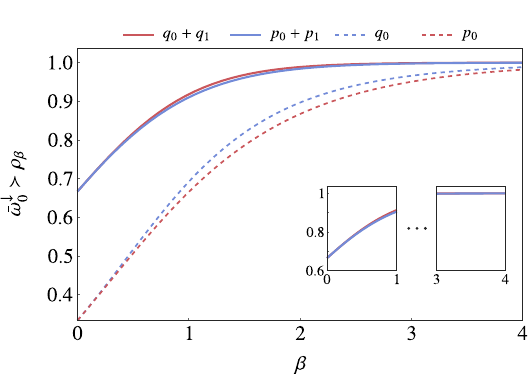}
    \caption{\emph{Majorisation inequalities}. For a three-level thermal state, we plot the marginal majorisation inequalities described above were $p_i$ are the populations of $\rho_\beta$ and $q_i$ are the eigenvalues of $\overline{\omega}^\downarrow_0$ for the estimates given in Eq.~\eqref{eq:qutrit_estimate} in Appendix~\hyperref[sec:warm_up]{D-1}, as a function of $\beta$.}
    \label{F-majorisation-inequalities}
\end{figure}

\begin{theorem}
Knowledge concentration is possible if and only if the first marginal of the ordered product of the estimate states majorises the target thermal state. That is, 
$$ \exists \, V \, : \,  \tr_{1, \dots, m-1}\{V \overline{\omega} V^\dagger\} = \rho_\beta \, \iff \,  \overline{\omega}^\downarrow_0 \succ \rho_\beta.$$
\end{theorem}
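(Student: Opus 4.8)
The plan is to strip the problem down to a statement about one-body marginals with a fixed global spectrum. Since $V$ ranges over \emph{all} unitaries on the $d^m$-dimensional space $\bigotimes_{\omega_i\in\Omega}\mathbb{C}^d$, the state $V\overline{\omega}V^\dagger$ ranges over exactly the states whose spectrum equals that of $\overline{\omega}$, i.e.\ the sorted list $\overline{\omega}^\downarrow$. Moreover, because we are free to fold a local unitary on the retained site into $V$, producing $\rho_\beta$ \emph{exactly} is equivalent to producing a marginal with the same spectrum as $\rho_\beta$ (which, being diagonal, is just its population vector). So the theorem reduces to the claim that, given the fixed global spectrum $\lambda:=\mathrm{spec}(\overline{\omega})$, the populations of $\rho_\beta$ are realisable as the spectrum of a single-site marginal iff they are majorised by $\overline{\omega}^\downarrow_0$. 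The first step is therefore to pin down $\overline{\omega}^\downarrow_0$ concretely: writing the retained site as $A$ ($\dim d$) and the discarded $m-1$ sites as $B$ ($\dim N=d^{m-1}$), sorting $\lambda$ with $A$ as the most significant index and tracing out $B$ yields the vector whose $k$-th partial sum is $\sum_{j=1}^{kN}\lambda_j$ — the consecutive-block marginal, which I will argue is the majorisation-maximal marginal.

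For the ``only if'' direction I would use a Ky Fan / partial-sum argument, which is the easy half. For any realisation $\rho_{AB}$ with $\mathrm{spec}(\rho_{AB})=\lambda$, and any rank-$k$ projector $P$ on $A$,
\[
\sum_{i=1}^k (\rho_A)^\downarrow_i = \max_{\mathrm{rank}\,P=k}\tr(P\rho_A) = \max_{\mathrm{rank}\,P=k}\tr\big((P\otimes\mathbb{1}_B)\rho_{AB}\big) \le \sum_{j=1}^{kN}\lambda_j,
\]
since $P\otimes\mathbb{1}_B$ is a rank-$kN$ projector and the largest weight a state of spectrum $\lambda$ places on any rank-$kN$ projector is $\sum_{j\le kN}\lambda_j$. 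As the right-hand side is exactly the $k$-th partial sum of $\overline{\omega}^\downarrow_0$, this gives $\rho_A\prec\overline{\omega}^\downarrow_0$ for every realisation, so $\overline{\omega}^\downarrow_0\succ\rho_\beta$ is necessary.

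The ``if'' direction is the constructive heart of the argument. Assuming $\overline{\omega}^\downarrow_0\succ\rho_\beta$, I would first realise $\overline{\omega}^\downarrow_0$ itself by placing the sorted eigenvalues into consecutive diagonal blocks of size $N$, which makes the $A$-marginal equal to the maximal vector. I would then \emph{descend} the majorisation order down to $\mathrm{spec}(\rho_\beta)$ by a finite sequence of two-level ``Robin Hood'' transfers: each transfer redistributes population between two $A$-levels $i,j$ and is implemented by a unitary supported on the $2N$-dimensional subspace $\mathrm{span}\{\ket{i},\ket{j}\}_A\otimes\mathcal{H}_B$. Such a unitary is global, hence preserves $\lambda$, while Schur--Horn applied inside the block lets the pair $(r_i,r_j)$ take any value consistent with majorisation. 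A final local unitary on $A$ then rotates the marginal into the diagonal form $\rho_\beta$.

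The step I expect to be the main obstacle is showing this descent can always be completed — i.e.\ that the achievable set of marginal spectra is precisely the full majorisation order-ideal $\{r : r\prec\overline{\omega}^\downarrow_0\}$, with \emph{no} obstruction from below. Here the structural fact to exploit is that the discarded dimension $N=d^{m-1}\ge d$ is large: the minimal element of the ideal, the maximally mixed $\mathbb{1}_d/d$, is itself realisable for any $\lambda$ (for instance by weighting a generalised-Bell basis of $\mathbb{C}^d\otimes\mathcal{H}_B$ by the $\lambda_j$, so that every component has marginal $\mathbb{1}_d/d$), and since $\rho_\beta\succ\mathbb{1}_d/d$ automatically, the upper majorisation bound is the only genuine constraint. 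Verifying that each two-level transfer remains spectrum-preserving and that the transfers connect any two comparable points is the technical content I would need to nail down, most cleanly by invoking the bipartite Schur--Horn/marginal characterisation for the $A\mid B$ cut in the regime $N\ge d$.
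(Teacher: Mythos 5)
Your ``only if'' direction is sound, and it takes a genuinely different (and arguably cleaner) route than the paper's: the Ky Fan argument, bounding $\max_{\mathrm{rank}\,P=k}\tr\bigl((P\otimes\mathbb{1}_B)\rho_{AB}\bigr)$ by the sum of the $kN$ largest global eigenvalues, yields $\rho_A \prec \overline{\omega}^\downarrow_0$ for \emph{every} realisation in one stroke, whereas the paper combines Schur's theorem ($\diag(\mu)\prec\lambda^{\mu}$) with a block summation of the diagonal and a separate reduction step to handle targets that are not already diagonal and ordered. Your preliminary reduction --- folding a local unitary on the retained site into $V$, so that only the spectrum of the marginal matters --- is also fine.

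The gap is in the ``if'' direction, and it sits exactly where the paper's key construction lives. You implement each Robin Hood transfer on levels $i,j$ by ``Schur--Horn applied inside the block'' $\mathrm{span}\{\ket{i},\ket{j}\}_A\otimes\mathcal{H}_B$. But a generic unitary on that block creates coherences of the form $\ketbra{i,b}{j,b}$, with the \emph{same} $B$-index on both sides, and such terms survive the partial trace: they appear as off-diagonal elements $\bra{i}\rho_A\ket{j}$ of the marginal. (Minimal example: $d=2$, $N=2$; rotating in the pair $\{\ket{00},\ket{10}\}$ produces the coherence $\ketbra{00}{10}$, which contributes directly to $\bra{0}\rho_A\ket{1}$.) Once the marginal is no longer diagonal, neither its diagonal nor its spectrum equals the transferred vector $(r_i',r_j',\dots)$, so the majorisation bookkeeping tracks nothing; iterating transfers additionally requires the restriction of the now non-diagonal global state to each subsequent block to remain diagonal, which you have not ensured. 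The missing ingredient is precisely the paper's central idea: choose correlating subspaces whose basis vectors differ in \emph{every} tensor slot --- the shifted-diagonal spaces $\mathcal{H}_{i_1,\dots,i_{m-1}}=\mathrm{span}\{\ket{j,\,j+i_1,\dots,\,j+i_{m-1}}\}_{j=0}^{d-1}$ (addition mod $d$) --- so that block coherences can never reach any marginal; applying the \emph{same} Schur--Horn unitary in every such block then acts on the marginal populations as a single unistochastic matrix and reaches every $\rho_\beta\prec\overline{\omega}^\downarrow_0$ in one shot, with no descent needed. (Your transfer scheme can be repaired in the same spirit, by pairing $\ket{i,b}$ with $\ket{j,\pi(b)}$ for a fixed-point-free permutation $\pi$ of $B$, but that structural choice is the crux and must be stated.) Finally, note that your fallback --- invoking ``the bipartite Schur--Horn/marginal characterisation for the $A\mid B$ cut'' --- is circular: that characterisation \emph{is} the theorem being proved.
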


\emph{Proof Sketch.} For the necessary implication $(\!\!\implies\!\!)$, let us assume that such a $V$ exists and denote $V\overline{\omega}V^\dagger = \mu$. Then, by the Schur-Horn theorem~\cite{ marshall1979inequalities}, the majorisation relation $\diag{\overline{\omega}} \succ \diag{\mu}$ holds in the basis of $\omega$, i.e.
\begin{align}
    \sum^{s}_{n = 1} r^\downarrow_n \geq  \, \sum^{s}_{n = 1} \mu^\downarrow_n \hspace{1cm} \forall s \in [1, d^m] ,
\end{align}
where $r^{\downarrow}_n$ are the entries of $\diag{\omega}$ arranged in descending order, and $\mu_n$ are the entries of $\diag{\mu}$ in descending order. By construction, the first marginal of $\overline{\omega}$ has entries given by $\sum^{d^{m-1} j}_{n=1} r_n$, and $\mu_0 = \rho_\beta$ can also be expressed as $\sum^{d^{m-1} j}_{n=1} \mu_n = p_j$. Since $\rho_\beta$ is a thermal state, the $p_j$ are arranged in descending order. The global majorisation then implies $\sum^{d^{n-1}}_{n=1} r^\downarrow_n \geq \mu^\downarrow_n = p_j$ and thus $\overline{\omega}_{0}^\downarrow \succ \rho_\beta$, as required.

For the sufficient statement $(\!\!\impliedby\!\!)$, assume that $\overline{\omega}^\downarrow_0 \succ \rho_\beta$, and take a constructive approach. Since $\rho_\beta$ is diagonal, the operator $V$ cannot introduce coherences in $V\overline{\omega}V^\dagger$ such that $\tr_{i_2,\dots,i_{m-1}}\{V\overline{\omega}V^\dagger\}$ is non-diagonal. Therefore, let us consider a $V$ that is composed of a direct sum of unitaries acting on subspaces, where the coherences do not affect the first marginal. One possible choice for such a decomposition could be
\begin{gather}
  \hspace{-0.3cm}  \mathcal{H}_{i_1,\dots,i_{m-1}} \!=\! \text{span}\left\{\ket{j \, j+i_1 \, j+i_2 \dots j+i_{m-1} }\right\}^{d-1}_{j = 0}.
\end{gather}
Here, the addition is modulo $d$, giving $\mathcal{H} = \bigoplus^{d-1}_{i_1,\dots,i_{m-1} = 0} \mathcal{H}_{i_1,\dots,i_{m-1}}$. In a given subspace $\mathcal{H}_{i_1,\dots,i_{m-1}}$, the basis kets never repeat and are always separated by at least one unit in Hamming weight in terms of their indices, ensuring that their coherences do not contribute to the first marginal. Constructing $V = \bigoplus^{d-1}_{i_1,\dots,i{m-1} = 0} U_{i_1,\dots,i_{m-1}}$ and decomposing $\overline{\omega}$ into blocks $\overline{\omega}_{i_1,\dots,i_{m-1}}$, we observe that, since $\overline{\omega}$ is diagonal, the action of one of these subspace unitaries $U_{i_1,\dots,i_{m-1}}$ on the block $\overline{\omega}_{i_1,\dots,i_{m-1}}$ induces a unitary stochastic matrix on the diagonal of this block, as
\begin{align}
    U_{i_1,\dots,i_{m-1}} \overline{\omega}_{i_1,\dots,i_{m-1}} &U_{i_1,\dots,i_{m-1}}^\dagger \nonumber\\ &=\ M_{i_1,\dots,i_{m-1}} \vec{r}_{i_1,\dots,i_{m-1}},
\end{align}
where $\vec{r}_{i_1,\dots,i_{m-1}}$ is the diagonal of $\overline{\omega}$ in this block and $M = |u^{i_1,\dots,i_{m-1}}_{j,k}|^2 \ketbra{j}{k}$ since $U_{i_1,\dots,i_{m-1}} = u^{i_1,\dots,i_{m-1}}_{j,k} \ketbra{j}{k}$. Focusing on the case where all $M_{i_1,\dots,i_{m-1}} = M$, we have $M\sum_{i_1,\dots,i_{m-1}} \vec{r}_{i_1,\dots,i_{m-1}} = M \diag{\overline{\omega}_0}$ from which we can reach any $\rho$ s.t. $\diag{\overline{\omega}_0} \succ \diag \rho$ by the Schur-Horn Theorem~\cite{marshall1979inequalities}. Since we assume $\overline{\omega}_0^\downarrow \succ \rho_\beta$ as a premise then $\diag{\overline{\omega}_0} \succ \diag \rho_\beta$ completing the proof. $\Box$

A more detailed presentation of this proof is provided in Appendix~\hyperref[sec:know_conc_proof]{D} including a warm up qutrit example in Appendix~\hyperref[sec:warm_up]{D-1}. Finally, in Fig.~\ref{F-majorisation-inequalities}, we plot the majorisation inequalities for $\rho_\beta$ (with probabilities $p_i$), representing an unknown qutrit thermal state of a system with an equidistant spectrum, and $\overline{\omega}$ (with probabilities $q_i$), obtained from three estimates as described in Appendix~\hyperref[sec:warm_up]{D-1}. Interestingly, we observe that for this specific dimension and Hamiltonian, exact recovery through knowledge concentration is always possible. However, for different Hamiltonians or dimensions, this may not necessarily be the case.

\section{Toy Model -- Extracting Work from an extended Szilard Engine with Limited Memory}
\label{sec:work_extrac}
\begin{figure}[b]
\centering
\includegraphics[width=\linewidth]{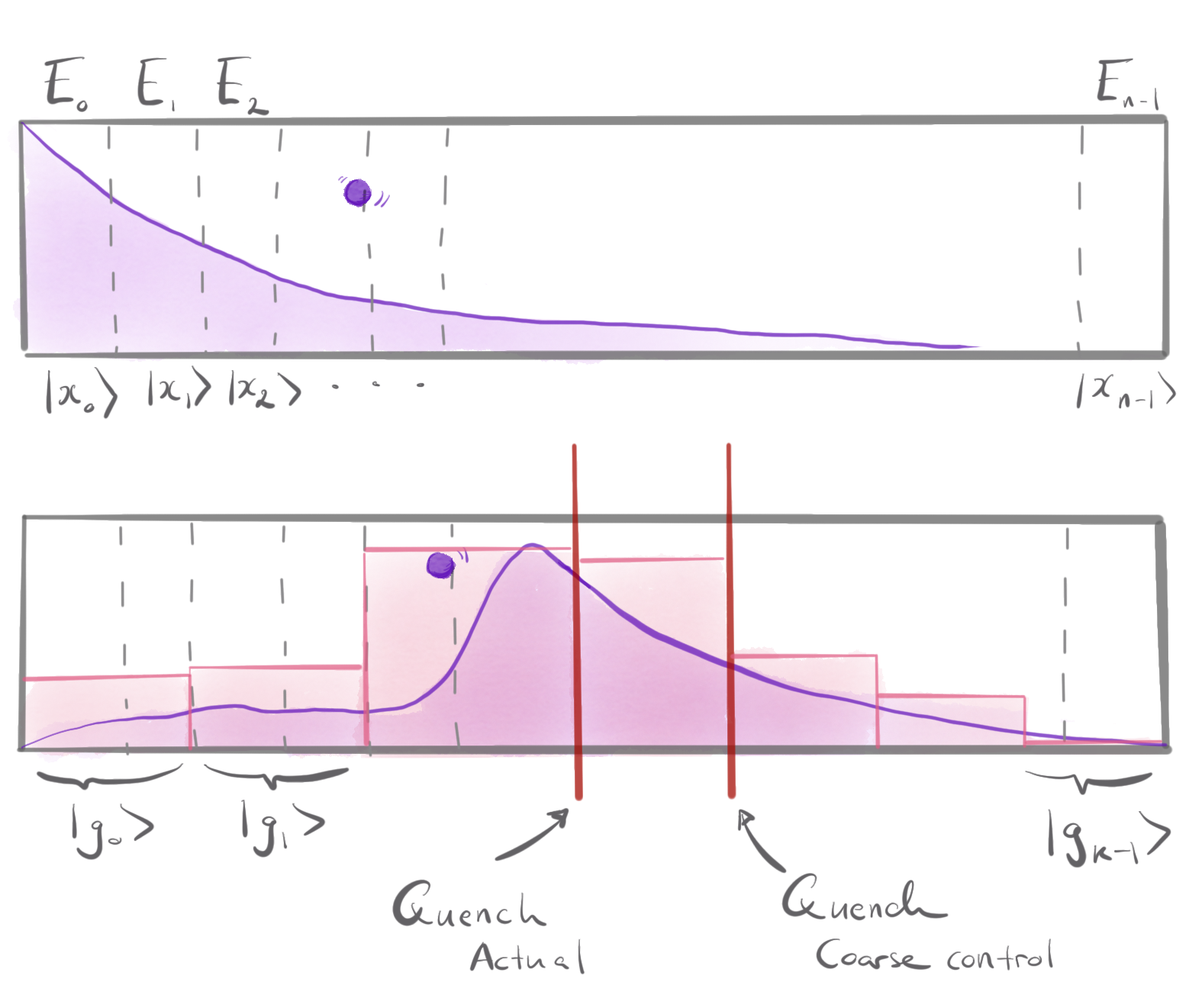}
    \caption{\emph{Extend Szilard engine with limited memory}. An illustration of the scenario considered in this section. An agent carries out a different quench changing the potential in a 1D box to extract work from a particle as it relaxes to a thermal state $\rho_\beta$. The quench carried out depends on the knowledge the agent has of the position distribution of the particle. A more coarse-grained understanding (pink) can lead to imperfect work extraction.}
\label{fig:szilard}
\end{figure}

To illustrate the applicability of the tools we have developed for handling coarse-grained control and limited probes, we now turn to a toy model. The Szilárd engine is a paradigmatic model in quantum thermodynamics, enabling a tractable analysis of Maxwell’s demon paradox through both theoretical and experimental approaches~\cite{szilard1929entropieverminderung,vedral_review,junior2025friendlyguideexorcisingmaxwells}. 

In this toy model, we consider a quantum particle trapped in a one-dimensional box with an energy gradient dependent on position. An agent is given a partition or piston that can be placed at any position along the box. By inserting the partition, the agent compresses the single-particle gas and can subsequently exploit its expansion to perform work. Whilst this work is seemingly carried out for free, it is equal to the work invested by an agent to learn the position of the particle and extract this work. But \textit{how does the size of the agent's memory and the precision of their control impact their ability to extract work} in this scenario? The position of the piston has not been considered in most erasure protocols but a recent work~\cite{cerisola_25} explored a scenario where erasing to $\ket{0}$ or $\ket{1}$ leads to different work output. In~\cite{cerisola_25} erasure of a bit encoded into a fermionic excitation in quantum dots led to different work extracted in this way due a chemical potential difference across the baths. The model presented here can be seen as a theoretical extension of this idea where multiple microstates in an erasure protocol differ in energy.

This question can be answered as an application of the framework and tools we have developed so far. Let us consider a simplified description of the particle’s evolution, modelling its position through a discretization of the Hamiltonian $H_{S} = \sum^{d-1}_{i = 0} E_i \ketbra{x_i}{x_i}$ where $\{\ket{x_i}\}^{d-1}_{i = 0 }$ denote the position eigenkets. In this discrete setting, the particle evolves under the translation operator $V_\text{Shift}(t) = e^{-iH_\text{Shift}t}$ generated by the Hamiltonian $H_{\text{Shift}} = \sum^{d-2}_{i = 0} \ketbra{x_i}{x_{i+1}} + \ketbra{x_{i+1}}{x_i}$. The work extraction protocol proceeds as follows: (i) the particle is initially in a thermal state $\rho_\beta$ of $H_S$ at inverse temperature $\beta$.  (ii) The particle is unitarily driven out of equilibrium by evolving under the shift operator for a time  $t_1$ giving $\rho' = V_\text{Shift}(t_1) \rho_\beta {V_\text{Shift}(t_1)}^\dagger$. (iii) The agent unitarily acquires information about the state of the particle via $U_{\textrm{IE}}\left(\rho' \otimes \sigma\right)U_{\textrm{IE}}^\dagger$, where $\sigma$ denotes the initial memory state. This produces an estimate $\omega$ of the particle’s state. (iv) Based on the estimate $\omega$ the agent performs an instantaneous quench of the system Hamiltonian.
\begin{align}
    &H_\text{Quench}(\omega) = \hat{\Delta}(\omega)H_S \text{ where}\nonumber\\
    &\hat{\Delta}(\omega) = \!\sum^{d-1}_{i = 0 } \bra{x_i}\omega \ket{x_i} E_\text{Max} \,
    \ketbra{x_i}{x_i}, \label{eq:protocol}
\end{align}
where $E_\text{Max} = \max_i \{E_i\}$ is the largest energy of $H_S$. This quench increases the energy in regions proportional to the population perceived by the agent after driving, thus performing work on the system, denoted $W_\text{in}$. (v) Finally, the agent allows the system to rethermalize with the environment, returning to the state $\rho_\beta$. In this process, the energy landscape is restored to its original form, and dissipative work $W_\text{out}$ is extracted.

In this protocol, the agent invests
\begin{gather}
    W_\text{In}(\omega) = \text{tr}\{(H_S - H_\text{Quench}(\omega))\rho'\},
\end{gather}
when performing the quench, and subsequently extracts
\begin{gather}
    W_\text{Out}(\omega) = \text{tr}\{H_\text{Quench}(\omega)\rho'\} - \text{tr}\{H_S \rho_\beta\},
\end{gather}
as the system relaxes back into equilibrium with the environment. Whilst the total work $W_\text{Tot} =  W_\text{In}(\omega) + W_\text{Out}(\omega)$ is independent of $\omega$ as $W_\text{Tot} = \text{tr}\{H_S(\rho' - \rho_\beta)\}$ the individual invested and extracted work for a given estimate is different. In particular, we are interested in the dissipative work extracted by an agent as the system relaxes to the thermal state $W_\text{Out}(\omega)$ for different estimates. It follows from the definition of the symmetrised estimate that the work extractable with access to a symmetrised estimate $\widetilde{\omega}_{\vec t}$ is the average work extractable over its corresponding set of estimates $\Omega_{\vec t}$,
\begin{align}
    W_\text{Out}(\widetilde{\omega}_{\vec t}) &= \text{tr}\{H_\text{Quench}(\widetilde{\omega}_{\vec t})\rho'\} - \text{tr}\{H_S \rho_\beta\} \nonumber\\
   &=  \frac{1}{m}\sum_{\omega_j \in \Omega_{\vec t}}\text{tr}\left\{\hat{\Delta}(\omega_{j})H_S\rho'\right\} - \text{tr}\{H_S \rho_\beta\} \nonumber\\
   &= \sum_{\omega_j \in \Omega_{\vec t}} \frac{W_\text{Out}(\omega_j)}{m}.
\end{align}
\begin{figure}[t]
    \centering
\includegraphics{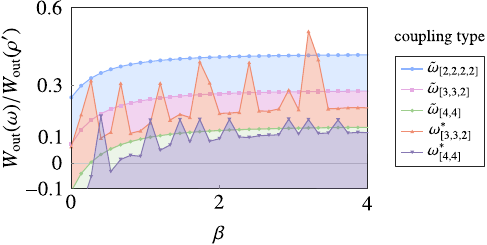}
\caption{\emph{Work Extraction by knowledge dependent quenches}. Dissipative work $W_\text{Out}(\omega)$ extracted by the agent during relaxation for different quenches $H_\text{Quench}(\omega)$ carried out on the system based on their estimate of the system $\omega$, as the initial temperature of the system decreases $\beta$. In this plot we assume the particle is trapped in a box with a potential well described by 8 positions $H_S = \sum^7_{i=0} E_i \ketbra{E_i}{E_i}$ with $E_i \in [0,7]$, the particle is driven under $H_\text{Shift}$ for $t_1 = 0.25$. The extracted work is shown as a ratio $W_\text{Out}(\omega)/W_\text{Out}(\rho')$ to the work extractable using a quench protocol with knowledge of the actual state $\rho'$. We plot random estimates $\omega^*_{\vec t}$ and symmetrised estimates $\widetilde{\omega}_{\vec t}$ of a given coupling type $\vec t$ where we see that as the dimension of the agent's probe increases, they are able to extract more work from the system. The code to obtain these numerics may be obtained at~\cite{code}.}
\label{fig:work_extract}
\end{figure}

The optimal work extraction protocol would have involved quenching the system such that the energies would be in descending order, leading to largest possible change when the system relaxes. But such a protocol presumes precise control of each energy level. As such, we chose the protocol Eq.~\eqref{eq:protocol} as it increases the energy of the system in a manner which is dependent on the estimate state $\omega$ obtained by the agent and their coarse control. That is, if the agent has access to a lower dimensional probe and is unable to distinguish some energy levels then they will quench these levels by the same amount under Eq.~\eqref{eq:protocol}. Their inability to distinguish energy levels and lack of knowledge impedes their ability to extract work. One can see this in Fig.~\ref{fig:work_extract} where in a scenario where the particle can take on 8 possible positions, knowledge obtained using a ququart probe leads to coarse graining which only allows for around half the possible work which can be extract via this protocol. As the dimension of the probe decreases, the more coarse graining the agent perceives and the less work they can extract.

\section{Conclusions \& Outlook \label{S:conclusions-outlook}}

The cost of reconstructing a quantum state or acquiring knowledge of it as an agent is an intricate and multifaceted subject. In this work, we shed light on a particular facet of this problem by framing knowledge acquisition as an agent's ability to reconstruct the diagonal of a quantum state in a fixed basis. In particular, we were interested in the role of limited control in this scenario and provided a two-step unitary model for knowledge acquisition in Sec.~\ref{S:information-extraction-estimate-generation} where the agent's ability to interact with the system of interest with access only to lower dimensional probes. In the information extraction phase, the agent correlated the probe with the system of interest and extracts coarse grained information of expectations of the system w.r.t a specific observable e.g. populations in the energy eigenbasis in the case of thermal states. In the estimate generation phase, the agent then attempt to unitarily reconstruct these populations in a memory system. Having constructed this model in Sec.~\ref{S:estimate-manipulation} we then consider what happens when an agent is allowed to interact with the system using their dimensionally restricted probes many times, obtaining several different estimate states. We explored two different strategies for manipulating the estimate states towards improving the agent's knowledge of the system (i) a symmetric approach where the information of the estimate states is averaged out over them (ii) an asymmetric approach where knowledge is concentrated into a single estimate allowing for complete recovery of the diagonal of the system if a majorisation condition is fulfilled. In Sec.~\ref{sec:work_extrac} we contextualised our results by considering a toy model of knowledge dependent work extraction. In particular we considered an agent quenching a particle in a 1D potential landscape based on the particle's position. The work the agent is able to extract in this setting is dependent on their knowledge of the position of the particle. In our model, the quality of this knowledge is dependent on the dimension of the probe the agent has access to and whether multiple rounds of interaction with the system were possible, allowing for the agent to apply symmetrisation. In Fig~\ref{fig:work_extract} we directly see these insights verified numerically as quench protocols performed better with access to higher dimensional probes and even better on average with access to symmetrised estimate states. 

Whilst the presented extended Szilard engine is only a toy model furnished by numerical results, the techniques we developed and exhibited using this model should allow for a better understanding of the resources agents need in quantum thermodynamics. The thermodynamic cost of information acquisition is a longstanding problem, dating back to Brillouin and his attempt to resolve the paradox of Maxwell's demon~\cite{brillouin1953negentropy}. More recently, fundamental bounds on the thermodynamic cost for information processing was obtained~\cite{PhysRevLett.102.250602}. However, a comprehensive framework that fully accounts for all the resources available to an agent is still lacking. Our basic insights show that knowledge acquisition is highly dependent on the resources an agent has access to, such as the dimension and purity or temperature of their probe system and memory. In this work, we have primarily focused on dimensional constraints, whereas previous studies have explored temperature restrictions with unrestricted probe dimensions~\cite{Guryanova_2020,minagawa2023universalvaliditysecondlaw}, or the transition from quantum to classical knowledge~\cite{Debarba2024}. To fully address the question of when the cost of knowledge acquisition is outweighed by the benefits of an agent's performance at a given task, one must combine our insights with those of~\cite{Guryanova_2020,Debarba2024,minagawa2023universalvaliditysecondlaw} to gain a complete understanding of the required resources.

Throughout the development of this work we studied other applications. The first given in Appendix~\ref{A:applications} was the impact of our techniques in an ergotropic setting where we investigated optimal work extraction from two thermal states at different temperatures. The second given in Appendix~\ref{A:resource_theory} was a preliminary investigation in how imperfect knowledge of the \textit{free states} of a resource theory could impact an agents ability to transform states using free operations. We present these works together with other supplementary work in the Technical Matter presented after the main text. 

\paragraph*{Acknowledgements} 
J.X. would like to acknowledge the Quantum Resources 2023 conference which inspired this project and in particular thank Felix Binder for his encouragement in pursuing this problem. J.X is indebted to Max Lock, Giulia Mazzolla and Benjamin Stratton for insightful conversations. In particular, Tony Short for pointing out the maximum rank increase that can occur in the estimate generation procedure. J.X., F.C., P.B. and M.H. acknowledge funding from the European Research Council (Consolidator grant `Cocoquest’ 101043705) and financial support from the Austrian Federal Ministry of Education, Science and Research via the Austrian Research Promotion Agency (FFG) through the project FO999914030 (MUSIQ) funded by the European Union – NextGenerationEU.

\providecommand{\noopsort}[1]{}\providecommand{\singleletter}[1]{#1}%
\clearpage
\appendix
\onecolumngrid
\begin{center}
    \large \bfseries Technical Matter
\end{center}
\vspace{1em}
\let\addcontentsline\oldaddcontentsline %

\begingroup
\parskip=0pt
\setcounter{tocdepth}{2}
\tableofcontents
\endgroup

\section{von Neumann Measurement Scheme\label{App:von-Neumann-measurement scheme}} 

In this appendix, we provide additional details on the von Neumann measurement scheme and illustrate it with a simple example. We begin by considering a quantum system $\ms{S}$ that interacts unitarily with an ancillary system $\ms{A}$. After this interaction, the ancillary system is projectively measured, and the resulting statistics are equivalent to those obtained from directly measuring the main system $\ms{S}$.

For a set of projection operators ${\Pi_i}$, the probability of observing outcome $x_i$ for a quantum state $\rho$ is given by the Born rule $p(x_i) = \tr\left\{\Pi_i \rho \right\}$. According to the von Neumann-L\"uders projection postulate~\cite{petruccione}, the post-measurement state corresponding to outcome $x_i$ is
\begin{gather}
    \rho_{x_i} = \frac{\Pi_i \rho \Pi_i}{\tr\left\{\Pi_i \rho \right\}}.
\end{gather}
This implies that the measurement, involving the full set of projectors, results in the mixed state:
\begin{gather}
\rho'=\sum_i p(x_i) \rho_{x_i} = \sum_i \Pi_i \rho \Pi_i .
\end{gather} 
In the von Neumann scheme, the measurement statistics can be encoded into an ancillary system $\ms{A}$, whose dimension matches the number of projectors in the measurement. This encoding is represented as
\begin{gather}
    \rho'_{\ms{AS}} = \sum_i p(x_i) \ketbra{a_i}{a_i} \otimes \rho(x_i)
\end{gather}
where $\ket{a_i}$ represents the states of the ancillary system. By projectively measuring the ancillary system, the measurement statistics of the system of interest can be recovered indirectly as:
\begin{gather}
\rho' = \sum_i p(x_i)\rho_{x_i}  = \tr_{\ms A}\left\{\rho'_{\ms{AS}}\right\}.
\end{gather}
As an example, consider measuring a qubit state $\ket{\psi} \in \mathbb{C}^{\otimes 2}$ in the $\sigma_z$ basis, with access to an ancillary qubit initially in the $\ket{0}_{\ms A}$ state. By unitarily acting on these qubits with
\begin{gather}
 V = \ketbra{0}{0}\otimes \mathbb{1} + \ketbra{0}{0}\otimes \sigma_z    
\end{gather}
gives the state $V(\ket{\psi}\otimes \ket{0}_A)= \braket{0|\psi}\ket{0}\otimes\ket{0}_A + \braket{1|\psi}\ket{1}\otimes\ket{1}_A$, which effectively encodes the statistics of the qubit of interest into the statistics of the ancillary qubit.

\section{Estimate Generation}
To generate the estimate, the agent begins with the state $\sigma \otimes \ketbra{0}_d$, where the non-zero elements are given by $\overline{p}_i \ket{i,0}$. The goal is to apply a unitary transformation to the initial state of the probe and the \textit{empty} memory, producing a global state in which the marginal state of the memory corresponds to the desired estimate state
\begin{gather}\omega = \tr_{\ms S}\{U(\sigma \otimes \ketbra{0}{0}_d)U^\dagger\}.\end{gather}
Considering the structure of $\omega$ in Eq.~\eqref{eq:omega}, the challenge is twofold (i) we need the elements to appear in the correct positions within the marginal. Specifically, we want each $p_i$ to be distributed across $t_i$ entries in the marginal, starting at position $\alpha_i = \sum^{i-1}_{j=0} t_j$, with $\alpha_0 = 0$, and ending at $\alpha_i + t_i$. (ii) We must ensure that the partial trace of the global state $U(\sigma \otimes \ketbra{0}{0}_d)U^\dagger$ results in $\omega$. Thus, when carrying out operations that create off-diagonal elements, we must make sure they populate entries in the matrix which do not contribute to the partial trace.

To achieve (i) we begin by permuting elements on the diagonal of the global state, using a two-level swap operation $S_{l,k}$, which does not induce coherences. In particular we carry out
\begin{align}
    \overline{p}_i \ket{i,0} \xrightarrow[]{S_{(i,0),(0,\alpha_i)}} \overline{p}_i \ket{0,\alpha_i}.
\end{align}
This operation rotates all the information into a block of the global state corresponding to the probe $\sigma$ being in $\ket{0}$, spanned by $\{\ket{0,n}\}^{d}_{n = 0}$. This means we only need to consider this block for the partial trace. Next, we apply a quantum Fourier transform, which evenly distributes $p_i$ across the desired subspace. This subspace must be spanned by $t_i$ basis elements to ensure that the correct number of entries appear in the marginal. We choose the subspace with basis $\{\ket{j,\alpha_i + j}\}^{t_i -1}_{j = 0}$, ensuring that any coherences generated within this subspace do not contribute to the partial trace. Performing the following operation
\begin{gather}
    \overline{p}_i \ket{0,\alpha_i} \xrightarrow[]{QFT_{t_i}} \sum^{t_i - 1}_{j = 0}\frac{\overline{p}_i}{\sqrt{t_i}} e^{i 2\pi j/t_i} \ket{j, \alpha_i + j}.
\end{gather}
creates $k$ blocks of size $t_i$ within the the global state's $\ket{0,n}$ subspace. This transformation is given by the following unitary:
\begin{equation}
    U_i^F =\sum_{j,l=0}^{t_i-1}\frac{e^{\frac{i2\pi jl}{t_i}}}{\sqrt{t_i}}\ketbra{j,\alpha_i + j}{l,\alpha_i+l}\oplus\mathbb{1}_{\text{Rest}}.
\end{equation}
Taking the partial trace, we need to consider the states
\begin{gather}
\sum^{k - 1}_{i = 0} \sum^{t_i - 1}_{j,m = 0}\frac{\overline{p}_i}{t_i} e^{\frac{i 2\pi(j-m)}{t_i}} \ketbra{j, \alpha_i + j}{m, \alpha_i + m},
\end{gather}
which summing over the indices of the probe gives 
\begin{gather}
    \omega = \sum^{k-1}_{i = 0} \sum^{t_i -1}_{k = 0} \frac{\overline{p}_i}{t_i} \ketbra{\alpha_i + k}{\alpha_i +k },
\end{gather}
as desired. Note that in constructing the subspace ${\ket{j,\alpha_i + j}}^{t_i - 1}_{j=0}$, this method presumes that the dimension of the probe $\sigma$ is larger than the number of elements in the largest partition, i.e.,, $k \geq \max{t_i}$. See Appendix~\ref{App:two-examples} for two explicit examples of estimate generation, considering two different measurement settings.

\subsection{Which Estimates should an agent generate?}

\begin{figure}[t]
    \centering
    \includegraphics[width = 0.35\textwidth]{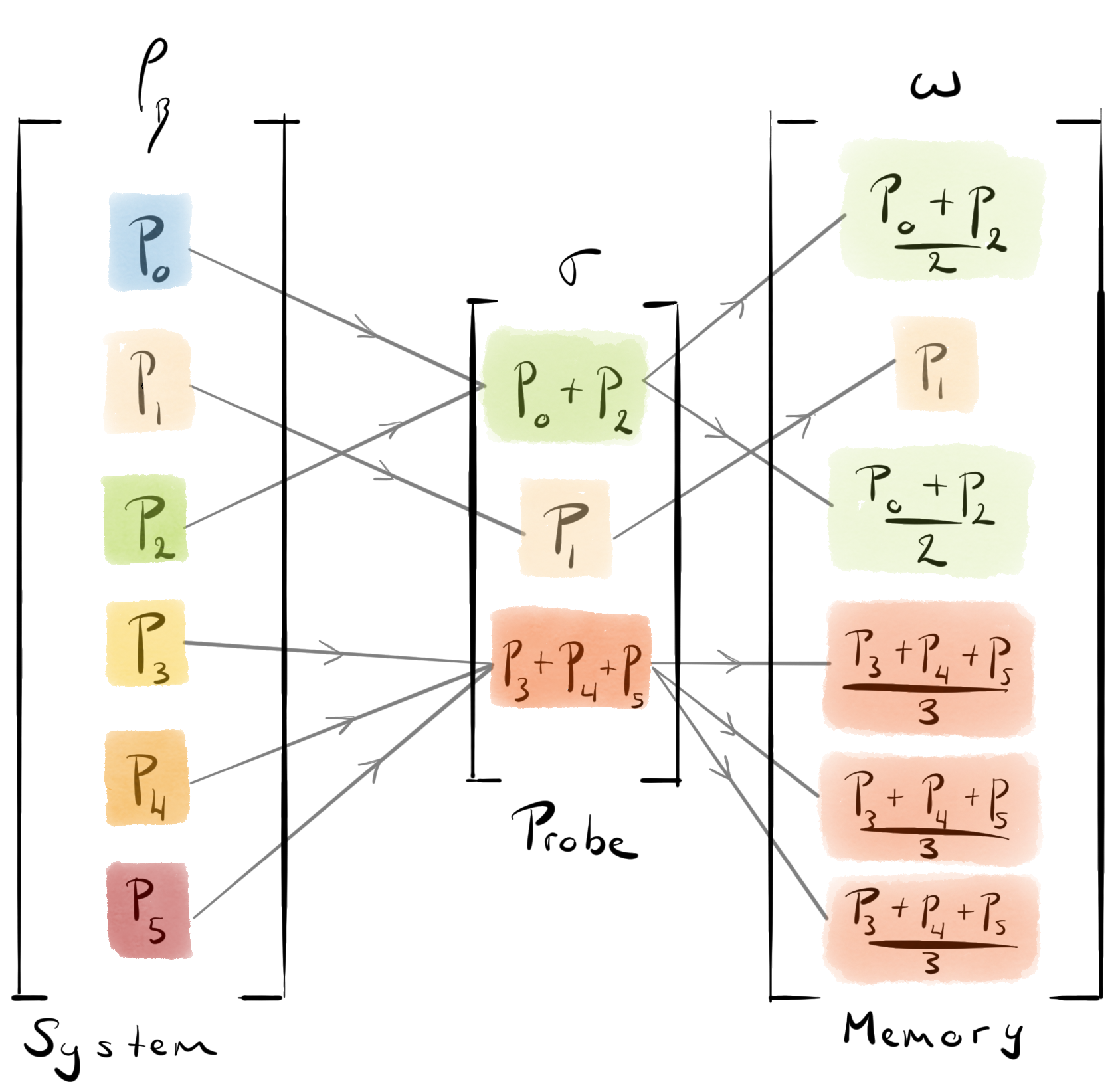}
    \caption{\emph{System, probe \& memory coupling}. An illustration showing the coupling that the agent carries out for a given measurement setting $\vec{t}$. If the agent couples $\ket{i}$ of the unknown system to $\ket{j}$ of the probe, then the probe is coupled to the $\ket{i}$th level of the memory. This is visualized for $d = 6$, $k = 3$, and $\vec{t} = (1, 2, 3)$.}
    \label{fig:coupling}
\end{figure}

Consider the situation where an agent is attempting to learn a $d = 6$ system using a $k = 3$ probe and measurement setting $\vec{t} = (1, 2, 3)$. They obtain $\sigma_{(2,1,3)} = (p_0 + p_2, p_1, p_3 + p_4 + p_5)$, as shown in Fig.~\ref{fig:coupling}. How will they decide to distribute the knowledge they have acquired in the memory system? Two possible estimate states for $\sigma_{(2,1,3)}$ are, 
\begin{gather}
\omega_{(2,1,3)} = \begin{pmatrix} \frac{p_0 + p_2}{2} \\ p_1 \\ \frac{p_0 + p_2}{2} \\ \frac{p_3 + p_4 + p_5}{3} \\ \frac{p_3 + p_4 + p_5}{3} \\ \frac{p_3 + p_4 + p_5}{3}\end{pmatrix} \:,\: \omega'_{(2,1,3)} = \begin{pmatrix}
\frac{p_3 + p_4 + p_5}{3} \\ \frac{p_0 + p_2}{2} \\ p_1 \\  \frac{p_0 + p_2}{2} \\ \frac{p_3 + p_4 + p_5}{3} \\ \frac{p_3 + p_4 + p_5}{3}     
\end{pmatrix},
\end{gather}
which, while equivalent up to a unitary, correspond to two quite physically distinct situations. For $\omega_{(2,1,3)}$, the agent correctly places the partial information $\overline{p}_i$ acquired about energy levels $\mathcal{P}i$ in the corresponding energy levels of the memory by carrying out the same coupling. In contrast, for $\omega'{(2,1,3)}$, the agent does not couple the probe to the memory in the same way that they coupled it to the system when extracting information. It is as if they have forgotten the initial coupling. For the remainder of this work, we will assume that the agent uses a coupling as described in the former situation, as visualized in Fig.~\ref{fig:coupling}, which we will also show leads to optimal fidelity when manipulating estimates symmetrically.

It is also important to note that we have assumed a one-to-one correspondence between probe and estimate states i.e., one $k$ level probe state leads to one $d$ level estimate state. In doing so, we are discarding situations where permutations of couplings between the probe and memory systems that lead to the same estimate. For a clear example, one may find the estimates for $d = 6, k = 2 $ and $\vec{t} = [3,3]$ and $d = 6, k=3, \vec{t} = [1,2,3]$ in Appendix~\hyperref[sec:d2_d2]{C-1}.
\subsection{Two Examples for estimate generation\label{App:two-examples}}

Here, we illustrate the estimate generation step with a non-trivial example. Consider a scenario where an agent seeks to reconstruct the state of a 6-dimensional thermal state $\rho_\beta = \left(p_0, p_1, p_2, p_3, p_4, p_5\right)$, and has access to a 3-dimensional probe, which is initially in the ground state, $\ket{0}_3$. The agent picks a measurement setting $\vec{t}_1 = \{1,2,3\}$ and extracts information from $\rho_\beta$ by applying a correlating unitary given by 
\begin{gather}
U_{\text{IE}} = \ketbra{0}{0}\otimes S_{0,0} + (\ketbra{1}{1} + \ketbra{2}{2} )\otimes S_{0,1} + (\ketbra{3}{3} + \ketbra{4}{4} + \ketbra{5}{5} )\otimes S_{0,2}.   
\end{gather}
After extraction, the agent's qutrit probe is in the state $\sigma = \tr_{\ms S}\left\{U_{\text{IE}}\left(\rho_\beta \otimes \ketbra{0}{0}_3 \right)U_{\text{IE}}^\dagger\right\} = (p_0, p_1 + p_2, p_3 + p_4 + p_5)$. Now, the goal is to distribute this information throughout a 6-dimensional memory. This is done by using the protocol described in Sec~\ref{sec:est_gen_protocol}. First, we coupled the qutrit probe with the memory
\begin{gather}
   \eta:= \sigma \otimes \ketbra{0}{0}_6 = p_0 \ketbra{00}{00} + (p_1 + p_2)\ketbra{10}{10} + (p_3 + p_4 + p_5) \ketbra{20}{20},
\end{gather}
and applies two-level SWAPs to arrange the populations in the correct positions within the $\ket{0i}$ top-left block of $\eta$. Specifically, the following two-level SWAP operations are applied: $S_{(1,0),(0,1)}$ and $S_{(2,0),(0,3)}$, to obtain
\begin{gather}
    (S_{20,03}S_{10,01})\eta (S^\dagger_{20,03} S_{10,01)})= p_0 \ketbra{00}{00} + (p_1 + p_2)\ketbra{01}{01} + (p_3 + p_4 + p_5) \ketbra{03}{03}.
\end{gather}
The next step consists of spreading the populations equally over the corresponding energy levels. This is achieved by applying a non-local Hadamard in the basis $\{\ket{01}, \ket{12}\}$, namely
\begin{gather}
    U^F_{01,12} = \frac{1}{\sqrt{2}}\left(\ketbra{01}{01} - \ketbra{12}{12} + \ketbra{01}{12} + \ketbra{12}{01} \right) \oplus \mathbb{1}_{\text{rest}},
\end{gather}
followed by quantum Fourier transforms over the $\ket{03}, \ket{14}, \ket{25}$ basis
\begin{align}
    U^F_{03,14,25} &= \frac{1}{\sqrt{3}}\left(\ketbra{03}{03} + \ketbra{03}{14} + \ketbra{03}{25} + \ketbra{14}{03}+ e^{i 2 \pi / 3} \ketbra{14}{14} + e^{ - i 2 \pi / 3}\ketbra{14}{25}\right. \nonumber \\ 
    &\hspace{5.5cm}+ \left. \ketbra{25}{03} + e^{ - i 2 \pi / 3} \ketbra{25}{14}+  e^{i 2 \pi / 3}\ketbra{25}{25} \right) \oplus  \mathbb{1}_{\text{rest}}.
\end{align}
Finally, by applying this sequence of unitaries, we obtain the estimated state. To recap, the entire protocol is represented by the product of unitary operations $\mathcal{U}_{123}:= S_{10,01}S_{20,03}U^F_{01,12}U^F_{03,14,25}$, where 
\begin{align}
    \omega &= \tr_{\ms S}\{\mathcal{U}_{123}(\sigma \otimes \ketbra{0}{0}_6)\mathcal{U}_{123}^\dagger \}\nonumber\\
        &= \left(
\begin{array}{cccccc}
 p_0 & 0 & 0 & 0 & 0 & 0 \\
 0 & \frac{1}{2} \left(p_1+p_2\right) & 0 & 0 & 0 & 0 \\
 0 & 0 & \frac{1}{2} \left(p_1+p_2\right) & 0 & 0 & 0 \\
 0 & 0 & 0 & \frac{1}{3} \left(p_3+p_4+p_5\right) & 0 & 0 \\
 0 & 0 & 0 & 0 & \frac{1}{3} \left(p_3+p_4+p_5\right) & 0 \\
 0 & 0 & 0 & 0 & 0 & \frac{1}{3} \left(p_3+p_4+p_5\right) \\
\end{array}
\right).
\end{align}

If instead the agent chose the coupling $\vec{t}_2 = (2,1,3)$ and obtains the state $\sigma_2 = (p_0 + p_1, p_2, p_3 + p_4 + p_5)$, then the series of unitaries are now $\mathcal{U}_{213} := S_{10,00}S_{20,03}U^F_{00,11}U^F_{03,14,25}$, which leads to
\begin{align}
    \omega &= \tr_{\ms S}\{\mathcal{U}_{213}(\sigma \otimes \ketbra{0}{0}_6)\mathcal{U}_{213}^\dagger \}\nonumber\\
        &= \left(
\begin{array}{cccccc}
 \frac{1}{2}(p_0+p_1) & 0 & 0 & 0 & 0 & 0 \\
 0 & \frac{1}{2}(p_0+p_1) & 0 & 0 & 0 & 0 \\
 0 & 0 & p_2 & 0 & 0 & 0 \\
 0 & 0 & 0 & \frac{1}{3} \left(p_3+p_4+p_5\right) & 0 & 0 \\
 0 & 0 & 0 & 0 & \frac{1}{3} \left(p_3+p_4+p_5\right) & 0 \\
 0 & 0 & 0 & 0 & 0 & \frac{1}{3} \left(p_3+p_4+p_5\right) \\
\end{array}
\right).
\end{align}
the estimate state corresponding to the measurement setting $\vec{t}_2$. Here, the protocol $\mathcal{U}_{213}$ differs to $\mathcal{U}_{123}$ only slightly since the position of the degenerate 2-dimensional subspace in the estimate state has changed.

\section{Symmetrisation \label{App:symmetrisation}}
\subsection{$d=6$, $k=2$, $\vec{t}=\{d/2,d/2\}$}
\label{sec:d2_d2}
To understand this situation, we first consider the estimate states for $d = 6$, $k = 2$, and $t = \{3,3\}$, where we have $\frac{1}{2} \binom{6}{3,3} = \frac{6!}{3!3!2} = 10$ estimates. Note that we divide the number of estimates by $2$ to avoid redundancy due to repetition in the $t$-vector. In other words, there would be another 10 estimates with equivalent populations to the 10 shown below, but these would be flipped across the $3$rd level.
\begin{align}
\begin{split}
\omega_0 &=  \begin{pmatrix}
\frac{p_0 + p_1 + p_2}{3}\\[0.5ex] 
\frac{p_0 + p_1 + p_2}{3}\\[0.5ex] 
\frac{p_0 + p_1 + p_2}{3}\\[0.5ex] 
\frac{p_3+p_4+p_5}{3} \\[0.5ex]
\frac{p_3+p_4+p_5}{3} \\[0.5ex]
\frac{p_3+p_4+p_5}{3}  
\end{pmatrix}  \:\: , \:\: \omega_1 = \begin{pmatrix}
\frac{p_0 + p_1 + p_3}{3}\\[0.5ex] 
\frac{p_0 + p_1 + p_3}{3}\\[0.5ex] 
\frac{p_2 + p_4 + p_5}{3}\\[0.5ex] 
\frac{p_0 + p_1 + p_3}{3} \\[0.5ex]
\frac{p_2 + p_4 + p_5}{3} \\[0.5ex]
\frac{p_2 + p_4 + p_5}{3} 
\end{pmatrix}
 \:\: , \:\: \omega_2 =   \begin{pmatrix}
\frac{p_0 + p_1 + p_4}{3}\\[0.5ex] 
\frac{p_0 + p_1 + p_4}{3}\\[0.5ex] 
\frac{p_2+ p_3 + p_5}{3}\\[0.5ex] 
\frac{p_2 + p_3 + p_5}{3} \\[0.5ex]
\frac{p_0 + p_1 + p_4}{3} \\[0.5ex]
\frac{p_2 + p_3 + p_5}{3}  
\end{pmatrix}
  \:\: , \:\: \omega_3 =   \begin{pmatrix}
\frac{p_0 + p_1 + p_5}{3}\\[0.5ex] 
\frac{p_0 + p_1 + p_5}{3}\\[0.5ex] 
\frac{p_2+ p_3 + p_4}{3}\\[0.5ex] 
\frac{p_2 + p_3 + p_4}{3} \\[0.5ex]
\frac{p_2 + p_3 + p_4}{3} \\[0.5ex]
\frac{p_0 + p_1 + p_5}{3}  
\end{pmatrix}  \:\: , \:\: \omega_4 = \begin{pmatrix}
\frac{p_0 + p_2 + p_3}{3}\\[0.5ex] 
\frac{p_1 + p_4 + p_5}{3}\\[0.5ex] 
\frac{p_0+ p_2 + p_3}{3}\\[0.5ex] 
\frac{p_0 + p_2 + p_3}{3} \\[0.5ex]
\frac{p_1 + p_4 + p_5}{3} \\[0.5ex]
\frac{p_1 + p_4 + p_5}{3}   
\end{pmatrix} \:\: , \\[0.5ex]
\omega_5 &= \begin{pmatrix}
\frac{p_0 + p_2 + p_4}{3}\\[0.5ex] 
\frac{p_1 + p_3 + p_5}{3}\\[0.5ex] 
\frac{p_0+ p_2 + p_4}{3}\\[0.5ex] 
\frac{p_1 + p_3 + p_5}{3} \\[0.5ex]
\frac{p_0 + p_2 + p_4}{3} \\[0.5ex]
\frac{p_1 + p_3 + p_5}{3}    
\end{pmatrix}  \:\: , \:\: \omega_6 = \begin{pmatrix}
\frac{p_0 + p_2 + p_5}{3}\\[0.5ex] 
\frac{p_1 + p_3 + p_4}{3}\\[0.5ex] 
\frac{p_0+ p_2 + p_5}{3}\\[0.5ex] 
\frac{p_1 + p_3 + p_5}{3} \\[0.5ex]
\frac{p_1 + p_3 + p_4}{3} \\[0.5ex]
\frac{p_0 + p_2 + p_5}{3}  
\end{pmatrix}  \:\: , \:\: \omega_7 =   \begin{pmatrix}
\frac{p_0 + p_3 + p_4}{3}\\[0.5ex] 
\frac{p_1 + p_2 + p_5}{3}\\[0.5ex] 
\frac{p_1+ p_2 + p_5}{3}\\[0.5ex] 
\frac{p_0 + p_3 + p_4}{3} \\[0.5ex]
\frac{p_0 + p_3 + p_4}{3} \\[0.5ex]
\frac{p_1 + p_2 + p_5}{3}   
\end{pmatrix}  \:\: , \:\: \omega_8= \begin{pmatrix}
\frac{p_0 + p_3 + p_5}{3}\\[0.5ex] 
\frac{p_1 + p_2 + p_4}{3}\\[0.5ex] 
\frac{p_1+ p_2 + p_4}{3}\\[0.5ex] 
\frac{p_0 + p_3 + p_5}{3} \\[0.5ex]
\frac{p_1 + p_2 + p_4}{3} \\[0.5ex]
\frac{p_0 + p_3 + p_5}{3}
\end{pmatrix}  \:\: , \:\: \omega_{9} =   \begin{pmatrix}
\frac{p_0 + p_4 + p_5}{3}\\[0.5ex] 
\frac{p_1 + p_2 + p_3}{3}\\[0.5ex] 
\frac{p_1+ p_2 + p_3}{3}\\[0.5ex] 
\frac{p_1 + p_2 + p_3}{3} \\[0.5ex]
\frac{p_0 + p_4 + p_5}{3} \\[0.5ex]
\frac{p_0 + p_4 + p_5}{3}  
\end{pmatrix}.
\end{split}
\end{align}
Consider now the $i$th population of the corresponding symmetrised estimate state 
\begin{align}
    \bra{i} \widetilde{\omega} \ket{i} &= \bra{i} \frac{1}{10}\sum^{9}_{n = 0} \omega_n \ket{i} = \frac{1}{30}(10p_i + 4(1 - p_i)) \\
    &=\frac{1}{30}(6p_i + 4).\nonumber
\end{align}
Observe that $p_i$ appears in each estimate in the $i$th position whilst each other term appears $4$ times in all the estimates in this position, giving the symmetrised estimate state $\widetilde{\omega} = \frac{1}{5}\left(\rho_\beta + 4 \frac{\mathbb 1}{6}\right)$. We can generalise this reasoning to arbitrary $d$ by thinking about what the $i$th entry of the symmetrised estimate will look like. Consider that in general for this setting we have $m = \frac{1}{2} \binom{d}{\frac{d}{2}, \frac{d}{2}} = \frac{d!}{(d/2)!(d/2)!2}$ estimates and that each term in the estimate is normalised by $\frac{d}{2}$ giving
\begin{align}
\bra{i} \widetilde{\omega} \ket{i} &= \bra{i} \frac{1}{m}\sum^{m - 1}_{n = 0} \omega_n \ket{i} = 
        \frac{2}{m d}\sum_{1 \leq \alpha_1 \leq \dots \leq \alpha_{d/2 - 1} \leq d - 1} p_i + p_{\alpha_1} + \dots + p_{\alpha_{d/2 - 1}}, \label{eq:gen_d_2}
\end{align}
where each $p_{\alpha_k} \in \mathcal{P} \setminus \{p_i\}$ is a choice of population from the set of populations $\mathcal P$ excluding $p_i$ and the indices $\alpha_k$ are such that each population appears once in a given sum of $d/2$ populations and permutations of sums are not over counted. Let us consider the combinatorics of these sums. Each sum can be represented by a set $\{a_0,\dots,a_{d/2 - 1}\}$ where each $a_k$ represents a contribution to the sum. Fixing the first term $a_0$ to be $p_i$, we wish to know how many sets there where the remaining $d/2 - 1$ slots are chosen from the remaining $d - 1$ populations. This is precisely the binomial coefficient $\binom{d - 1}{d/2 - 1}$ which we require to be equal to the number of estimates $m$ and indeed is after inspection. To see this consider
\begin{align}
    &\binom{d - 1}{d/2 - 1} = \frac{(d - 1)!}{(d/2)!(d/2 - 1)!} = \frac{(d - 1)!}{(d/2)(d/2 - 1)!(d/2 - 1)!} \\
\intertext{and}
    &m = \frac{1}{2}\binom{d}{\frac{d}{2},\frac{d}{2}} = \frac{d!}{2(d/2)!(d/2)!} = \frac{d(d-1)!}{2(d/2)(d/2-1)!(d/2)(d/2-1)!}
\end{align}
which are clearly equal expression. The number of times any other $p_k \neq p_i$ appears in the sums Eq.~\eqref{eq:gen_d_2} can also be found as the binomial expression $\binom{d-2}{d/2 -2}$ in this way Eq.~\eqref{eq:gen_d_2} becomes
\begin{align}
    \bra{i} \widetilde{\omega} \ket{i} &= \frac{2}{md}\left( \binom{d-1}{d/2 -1}p_i + \binom{d-2}{d/2 - 2}\left(\sum^{d-1}_{\stackrel{j =0}{i\neq j}}p_j\right) \right).
\end{align}
Recalling Pascal's identity $\binom{n}{k} = \binom{n - 1}{k -1} + \binom{n - 1}{k}$ and that $\sum^{d - 1}_{j = 0} p_j =1$, allows us to write Eq.~\eqref{eq:gen_d_2} as
\begin{align}
    \bra{i} \widetilde{\omega} \ket{i} &= \frac{2}{d\binom{d-1}{d/2 -1}}\left( \binom{d-2}{d/2 - 1}p_i + \binom{d-2}{d/2 - 2}\right),
\end{align}
and finally
\begin{equation}
    \widetilde{\omega} = \frac{2}{\binom{d-1}{d/2 -1}}\left( \frac{1}{d}\binom{d-2}{d/2 - 1}\rho_\beta + \binom{d-2}{d/2 - 2}\frac{\mathbb 1}{d}\right)
\end{equation}
a rather remarkably cute formula.
\subsection{$d = 6$, $k = 3$, $\vec{t} = \{1,2,3\}$}
\label{sec:d6k3}
Let us look at a special case, namely when $d=6$, $k=3$, and the measurement setting is $\vec{t} = (1,2,3)$. In this case, we know that the number of probes is $m = \binom{6}{3,2,1} = 60$. To derive a closed-form expression for the estimates, we start by noticing that the $m$ probes can be divide into $10$ groups, characterized by having a given selected level in its respective position, e.g., the first group is given by:
\begin{align}\label{Eq:all-probes-d-6}
\begin{split}
\omega_0 &=  \begin{pmatrix}
p_0\\[0.5ex] 
\frac{p_1 + p_2}{2}\\[0.5ex] 
\frac{p_1 + p_2}{2}\\[0.5ex] 
\frac{p_3+p_4+p_5}{3} \\[0.5ex]
\frac{p_3+p_4+p_5}{3} \\[0.5ex]
\frac{p_3+p_4+p_5}{3}  
\end{pmatrix}  \:\: , \:\: \omega_1 = \begin{pmatrix}
p_0\\[0.5ex] 
\frac{p_1 + p_3}{2}\\[0.5ex] 
\frac{p_2 + p_4 + p_5}{3}\\[0.5ex] 
\frac{p_1 + p_3}{2}\\[0.5ex] 
\frac{p_2 + p_4 + p_5}{3}\\[0.5ex] 
\frac{p_2 + p_4 + p_5}{3}  
\end{pmatrix}
 \:\: , \:\: \omega_2 =   \begin{pmatrix}
p_0\\[0.5ex] 
\frac{p_1 + p_4}{2}\\[0.5ex] 
\frac{p_2 + p_3 + p_5}{3}\\[0.5ex] 
\frac{p_2 + p_3 + p_5}{3}\\[0.5ex] 
\frac{p_1 + p_4}{2}\\[0.5ex] 
\frac{p_2 + p_3 + p_5}{3}  
\end{pmatrix}
  \:\: , \:\: \omega_3 =   \begin{pmatrix}
p_0\\[0.5ex] 
\frac{p_1 + p_5}{2}\\[0.5ex] 
\frac{p_2 + p_3 + p_4}{3}\\[0.5ex] 
\frac{p_2 + p_3 + p_4}{3}\\[0.5ex] 
\frac{p_2 + p_3 + p_4}{3}\\[0.5ex] 
\frac{p_1 + p_5}{2}  
\end{pmatrix}  \:\: , \:\: \omega_4 = \begin{pmatrix}
p_0\\[0.5ex] 
\frac{p_1 + p_3 + p_4}{3}\\[0.5ex] 
\frac{p_2 + p_5}{2}\\[0.5ex] 
\frac{p_1 + p_3 + p_4}{3}\\[0.5ex] 
\frac{p_1 + p_3 + p_4}{3}\\[0.5ex] 
\frac{p_2 + p_5}{2}  
\end{pmatrix} \:\: , \\[0.5ex]
\omega_5 &= \begin{pmatrix}
p_0\\ 
\frac{p_1 + p_2 + p_4}{3}\\[0.5ex] 
\frac{p_1 + p_2 + p_4}{3}\\[0.5ex] 
\frac{p_3 + p_5}{2}\\[0.5ex] 
\frac{p_1 + p_2 + p_4}{3}\\[0.5ex] 
\frac{p_3 + p_5}{2}  
\end{pmatrix}  \:\: , \:\: \omega_6 = \begin{pmatrix}
p_0\\[0.5ex] 
\frac{p_1 + p_2 + p_3}{3}\\[0.5ex] 
\frac{p_1 + p_2 + p_3}{3}\\[0.5ex] 
\frac{p_1 + p_2 + p_3}{3}\\[0.5ex] 
\frac{p_4 + p_5}{2}\\[0.5ex] 
\frac{p_4 + p_5}{2}  
\end{pmatrix}  \:\: , \:\: \omega_7 =   \begin{pmatrix}
p_0\\[0.5ex] 
\frac{p_1 + p_2 + p_5}{3}\\[0.5ex] 
\frac{p_1 + p_2 + p_5}{3}\\[0.5ex] 
\frac{p_3 + p_4}{2}\\[0.5ex] 
\frac{p_3 + p_4}{2}\\[0.5ex] 
\frac{p_1 + p_2 + p_5}{3} 
\end{pmatrix}  \:\: , \:\: \omega_8= \begin{pmatrix}
p_0\\[0.5ex] 
\frac{p_1 + p_4 + p_5}{3}\\[0.5ex] 
\frac{p_2 + p_3}{2}\\[0.5ex] 
\frac{p_2 + p_3}{2}\\[0.5ex] 
\frac{p_1 + p_4 + p_5}{3}\\[0.5ex] 
\frac{p_1 + p_4 + p_5}{3}  
\end{pmatrix}  \:\: , \:\: \omega_{9} =   \begin{pmatrix}
p_0\\[0.5ex] 
\frac{p_1 + p_3 + p_5}{3}\\[0.5ex] 
\frac{p_2 + p_4}{2}\\[0.5ex] 
\frac{p_1 + p_3 + p_5}{3}\\[0.5ex]
\frac{p_2 + p_4}{2}\\[0.5ex]
\frac{p_1 + p_3 + p_5}{3}  
\end{pmatrix}.
\end{split}
\end{align}
Now summing all probes given in Eq.~~\eqref{Eq:all-probes-d-6}, we can re-write their sum as
\begin{equation}\label{Eq:sum-probes-0}
    \sum_{i=0}^{9}\omega_i = \begin{pmatrix}
10 p_0\\ 
\frac{1}{2} \qty[5p_1+3(1-p_0)]\\ 
\frac{1}{2} \qty[5p_2+3(1-p_0)]\\ 
\frac{1}{2} \qty[5p_3+3(1-p_0)]\\ 
\frac{1}{2} \qty[5p_4+3(1-p_0)]\\ 
\frac{1}{2} \qty[5p_5+3(1-p_0)]
\end{pmatrix}.
\end{equation}
Notice that the remaining probes have the same structure as the ones given in Eq.~~\eqref{Eq:all-probes-d-6}, but with $p_0$ replaced by $p_i$, where $i\in{1,d-1}$, and its position corresponding to its $i$th label. Similarly, if we sum over them, we find the same structure as in Eq.~~\eqref{Eq:sum-probes-0}. Consequently, the symmetrised estimate in a setting $\v t = (1,2,3)$ is given by:
\begin{align}\label{Eq:estimate-123}
\begin{split}
\tilde{\omega}&=\frac{1}{60}\left[\begin{pmatrix}
10 p_0\\ 
\frac{1}{2} \left[3(1-p_0)+5 p_1\right]\\[0.5ex]  
\frac{1}{2} \left[3(1-p_0)+5 p_2\right]\\[0.5ex]  
\frac{1}{2} \left[3(1-p_0)+5 p_3\right]\\[0.5ex]  
\frac{1}{2} \left[3(1-p_0)+5 p_4\right]\\[0.5ex]  
\frac{1}{2} \left[3(1-p_0)+5 p_5\right]
\end{pmatrix}+\begin{pmatrix}
\frac{1}{2} \left[5 p_0+3(1-p_1)\right]\\[0.5ex]  
10 p_1\\ 
\frac{1}{2} \left[3(1-p_1)+5 p_2\right]\\[0.5ex]  
\frac{1}{2} \left[3(1-p_1)+5 p_3\right]\\[0.5ex]  
\frac{1}{2} \left[3(1-p_1)+5 p_4\right]\\[0.5ex]  
\frac{1}{2} \left[3(1-p_1)+5 p_5\right]
\end{pmatrix}+\begin{pmatrix}
\frac{1}{2} \left[5 p_0+3(1-p_2)\right]\\[0.5ex]  
\frac{1}{2} \left[5 p_1+3(1-p_2)\right]\\[0.5ex]  
10 p_2\\ 
\frac{1}{2} \left[3(1-p_2)+5 p_3\right]\\[0.5ex]  
\frac{1}{2} \left[3(1-p_2)+5 p_4\right]\\[0.5ex]  
\frac{1}{2} \left[3(1-p_2)+5 p_5\right]
\end{pmatrix}\right.\\&\hspace{0.8cm}+\left.\begin{pmatrix}
\frac{1}{2} \left[5 p_0+3(1-p_3)\right]\\[0.5ex]  
\frac{1}{2} \left[5 p_1+3(1-p_3)\right]\\[0.5ex]  
\frac{1}{2} \left[5 p_2+3(1-p_3)\right]\\[0.5ex]  
10 p_3\\ 
\frac{1}{2} \left[3(1-p_3)+5 p_4\right]\\[0.5ex]  
\frac{1}{2} \left[3(1-p_3)+5 p_5\right]
\end{pmatrix}+\begin{pmatrix}
\frac{1}{2} \left[5 p_0+3(1-p_4)\right]\\[0.5ex]  
\frac{1}{2} \left[5 p_1+3(1-p_4)\right]\\[0.5ex]  
\frac{1}{2} \left[5 p_2+3(1-p_4)\right]\\[0.5ex]  
\frac{1}{2} \left[5 p_3+3(1-p_4)\right]\\[0.5ex]  
10 p_4\\ 
\frac{1}{2} \left[3(1-p_4)+5 p_5\right]
\end{pmatrix}+\begin{pmatrix}
\frac{1}{2} \left[5 p_0+3(1-p_5)\right]\\[0.5ex]  
\frac{1}{2} \left[5 p_1+3(1-p_5)\right]\\[0.5ex]  
\frac{1}{2} \left[5 p_2+3(1-p_5)\right]\\[0.5ex]  
\frac{1}{2} \left[5 p_3+3(1-p_5)\right]\\[0.5ex]  
\frac{1}{2} \left[5 p_4+3(1-p_5)\right]\\[0.5ex]  
10 p_5
\end{pmatrix}\right].
\end{split}
\end{align}

Since the thermal state we want to learn is given by $\rho_{\beta}= (p_0, ..., p_5)^{\ms T}$, we can write Eq.~~\eqref{Eq:estimate-123} as
\begin{equation}\label{Eq:estimate-123-f}
    \tilde{\omega} = \frac{2}{5}\qty(\rho_{\beta}+\frac{1}{4}\iden_6).
\end{equation}

\section{Knowledge Concentration \label{App:knowlegde}}
\subsection{Warm-up - A Qutrit Example}
\label{sec:warm_up}
An agent attempting to estimate the state of a qutrit in a thermal state $\rho_\beta = (p_0,p_1,p_2)$ obtains a set of estimates $\Omega = \{\omega_0, \omega_1, \omega_2\}$ with values 
\begin{align}
    \Omega = \left\{\begin{pmatrix}
        p_0 \\ \frac{p_1 + p_2}{2} \\ \frac{p_1 + p_2}{2}
    \end{pmatrix},\begin{pmatrix}
        \frac{p_0 + p_2}{2} \\ p_1 \\ \frac{p_0 + p_2}{2}
    \end{pmatrix}, \begin{pmatrix}
        \frac{p_0 +p_1}{2} \\ \frac{p_0 + p_1}{2} \\ p_2
    \end{pmatrix} \right\} \label{eq:qutrit_estimate}
\end{align}
by sequentially correlating three qubit probes initially in the $\ket{0}_P$ state with the system and then generating the estimates in three qutrit memory states $\ket{0}_M$, following the protocol outlined in Sec~\ref{sec:est_gen_protocol}. The total knowledge the agent has obtained is captured by the product of the estimate states they have be able to obtain $\overline{\omega} = \omega_0 \otimes \omega_1 \otimes \omega_2.$ We wish to concentrate the knowledge obtained in each estimate into a single system and investigate when it is possible to reconstruct the state of $\rho_\beta$ in the first marginal $\tilde{\omega}_0 = \tr_{1,2}\{V \overline{\omega} V^\dagger\}$ of the transformed product of the estimate states. We have shown in Sec.~\ref{sec:know_conc} that such a unitary exists when $\overline{\omega}^\downarrow_0 \succ \rho_\beta$ and will make use of this Appendix to give an explicit example which illustrates this. 

To begin with, the unitaries $V$ we are interested in should create correlations in the global state $\overline{\omega}$ which do not create coherences in the marginals since the desired state $\rho_\beta$ is diagonal. One way to do this is to consider that the state space of the three qutrit estimates $\mathcal{H}_E = (\mathbb{C}^{\times 3})^{\otimes 3}$ may be partitioned into 9 subspaces 
\begin{align}
\label{eq:qutrit_decomp}
\mathcal{H}_{00} = \text{span}\{\ket{000}, \ket{111}, \ket{222} \} && \mathcal{H}_{01} = \text{span}\{\ket{001}, \ket{112}, \ket{220} \} && \mathcal{H}_{02} = \text{span}\{\ket{002}, \ket{110}, \ket{221} \}    \nonumber \\
\mathcal{H}_{10} = \text{span}\{\ket{010}, \ket{121}, \ket{202} \} && \mathcal{H}_{11} = \text{span}\{\ket{011}, \ket{122}, \ket{200} \} && \mathcal{H}_{12} = \text{span}\{\ket{012}, \ket{120}, \ket{201} \}    \\
\mathcal{H}_{20} = \text{span}\{\ket{020}, \ket{101}, \ket{212} \} && \mathcal{H}_{21} = \text{span}\{\ket{021}, \ket{102}, \ket{210} \} && \mathcal{H}_{22} = \text{span}\{\ket{022}, \ket{100}, \ket{211} \}     \nonumber
\end{align}
generally of the form
\begin{align}
    \mathcal{H}_{jk} = \text{span}\left\{\ket{i \, (i+j)\text{mod}\,3 \,\, (i+k)\text{mod}\, 3}\right\}^2_{i = 0}
\end{align}
giving $\mathcal{H}_E = \bigoplus^2_{j,k = 0} \mathcal{H}_{j,k}$. We can now observe that choosing $V = \bigoplus^2_{j,k = 0} V_{j,k}$ where each $V_{j,k}$ acts only on the space $\mathcal{H}_{j,k}$ will ensure that no global correlations result in off-diagonal terms in $\widetilde{\omega}_0$ when applying such a $V$ to $\overline{\omega}$. To see this, consider an arbitrary state $\eta$ satisfying this direct sum structure
\begin{align}
    \eta &= \bigoplus^2_{j,k = 0} \eta_{j,k} \\
    &= \bigoplus^2_{j,k = 0} \sum^2_{l,m = 0} q^{j,k}_{l,m} \ketbra{l \, (l+j)\text{mod}\,3 \, (l+k)\text{mod}\, 3}{m \, (m+j)\text{mod}\,3 \, (m+k)\text{mod}\, 3} 
\end{align}
where $q^{j,k}_{l,m}$ correspond to the $l,m$th coefficients of the $j,k$th block, and take a partial trace to find its first marginal.
\begin{align}
    \eta_0 &= \tr_{1,2} \{\eta\} = \sum^{2}_{r,s = 0} \bra{r \, s} \bigoplus^2_{j,k = 0} \eta_{j,k} \ket{r \, s} \\
           &= \sum^2_{j,k = 0} \left( \sum^2_{r,s,l,m = 0} q^{j,k}_{l,m} \ketbra{l}{m}  \braket{r \, s|(l+j)\text{mod}\,3 \, (l+k)\text{mod}\, 3}\braket{(m+j)\text{mod}\,3 \, (m+k)\text{mod}\, 3|r \, s} \right) \\
           &= \sum^2_{j,k = 0} \left( \sum^2_{r,s,l,m = 0} q^{j,k}_{l,m} \ketbra{l}{m}  \delta_{r s, (l+j)\text{mod}\,3 (l+k)\text{mod}\, 3} \, \delta_{(m+j)\text{mod}\,3 (m+k)\text{mod}\, 3 , r s} \right) \\
           &= \sum^2_{j,k = 0} \left( \sum^2_{l,m = 0} q^{j,k}_{l,m} \ketbra{l}{m}  \delta_{(l+j)\text{mod}\,3 (l+k)\text{mod}\, 3, (m+j)\text{mod}\,3 (m+k)\text{mod}\, 3} \right) 
\end{align}
This term has non-zero contributions when $r s = (l+j)\text{mod}\,3 (l+k)\text{mod}\, 3$ and $rs = (m+j)\text{mod}\,3 (m+k)\text{mod}\, 3$ which is only possible if $l = m$ giving
\begin{align}
\eta_0 = \sum^2_{j,k,l = 0}  q^{j,k}_{l} \ketbra{l}{l} \label{eq:marginal}
\end{align}
which is clearly diagonal state, with a sum of 9 contributions (one from each block) for each entry. In this way, any off-diagonal terms created in each block by $V_{j,k}$ will not contribute to off-diagonal terms in the first marginal, as we  desire. 

Let us now examine the application of this block-diagonal unitary $V$ on $\overline{\omega}$ to see if reconstruction is possible. Consider 
\begin{align}
    \widetilde{\omega} = \bigoplus^{2}_{j,k = 0} U_{j,k} \overline{\omega}_{j,k} U^{\dagger}_{j,k}
\end{align}
where 
\begin{align}\overline{\omega}_{j,k} = \sum^{2}_{i = 0} \bra{i \, (i+j)\text{mod}\,3 \, (i+k)\text{mod}\, 3} &\overline{\omega} \ket{i \, (i+j)\text{mod}\,3 \, (i+k)\text{mod}\, 3}\\
&\times\ketbra{i \, (i+j)\text{mod}\,3 \, (i+k)\text{mod}\, 3}\nonumber\end{align}
are the entries of $\overline{\omega}$ grouped according to the decomposition of the Hilbert Space as given in Eq.~\eqref{eq:qutrit_decomp}, which are diagonal since $\overline{\omega}$ is diagonal. Since these terms are diagonal it will prove useful to introduce a vector notation for them $\vec{r}_{j,k}$ with elements $r^{j,k}_{l}$ e.g. 
\begin{align}
    \vec{r}_{00} = \begin{pmatrix}
        \bra{000}\overline{\omega}\ket{000} \\ \bra{111}\overline{\omega}\ket{111} \\ \bra{222}\overline{\omega}\ket{222}
    \end{pmatrix}
\end{align}
and to note that the first marginal of $\overline{\omega}$ is the sum of these 9 vectors $\overline{\omega}_{0} = \tr_{1,2}\{\overline{\omega}\} = \sum^2_{j,k = 0} \vec{r}_{j,k}$, as evidenced by the form of Eq.~\eqref{eq:marginal}. These block-diagonal unitaries induce the impact of a unistochastic matrix on the vectorised form of $\overline{\omega}$ as from
\begin{gather}
    \widetilde{\omega} = \bigoplus^2_{j,k = 0} \sum^2_{l,m = 0} u^{j,k}_{l,m} {u^{j,k}_{m,l}}^* \ketbra{l}{m} \overline{\omega}_{j,k} \ketbra{m}{l},
\end{gather}
we identify the unistochastic matrix $M_{j,k} = u^{j,k}_{l,m} {u^{j,k}_{m,l}}^* \ketbra{l}{m} = |u^{j,k}_{l,m}|^{2}\ketbra{l}{m}$ and so since the first marginal is diagonal under the action of this unitary we may write
\begin{gather}
     \vec{\widetilde{\omega}}_0 = \sum^2_{j,k = 0} M_{j,k} \vec{r}_{j,k}.
\end{gather}
Restricting to the case where we set these 9 unistochastic matrices to be equal $M_{j,k} = \mathbf{M}$ we recover the majorisation condition presented in section \ref{sec:know_conc},
\begin{align}
\vec{\widetilde{\omega}}_0 = \mathbf{M} \sum^2_{j,k = 0} \vec{r}_{j,k} = \mathbf{M} \vec{\omega_0} \stackrel{?}{=} \rho_\beta
\end{align}
where such a unistochastic matrix $\mathbf{M}$ transforming the state to $\rho_\beta$ exists if $\overline{\omega}^\downarrow_0 \succ \rho_\beta$. For an explicit example, let the system be in the state $\rho_\beta = \frac{1}{6}(3,2,1)$ which is unknown to the agent. The agent would then obtain the set of estimates 
\begin{align}
    \Omega = \left\{\begin{pmatrix}
        \frac{1}{2} \\ \frac{1}{4} \\ \frac{1}{4}
    \end{pmatrix},\begin{pmatrix}
        \frac{1}{3} \\ \frac{1}{3} \\ \frac{1}{3}
    \end{pmatrix}, \begin{pmatrix}
        \frac{5}{12} \\ \frac{5}{12} \\ \frac{1}{6}
    \end{pmatrix} \right\}
\end{align}
and so the ordered marginal $\overline{\omega}^\downarrow_0 = (\frac{25}{48},\frac{5}{16},\frac{1}{6}) \succ \frac{1}{6}(3,2,1)$ which clearly majorises $\rho_\beta$ meaning a transformation $\mathbf{M}$ exists! In fact, using the \texttt{Mathematica} code~\cite{code} one finds that a suitable $\mathbf{M}$ would be 
\begin{gather}
    \mathbf{M} = \left(
\begin{array}{ccc}
 \frac{9}{10} & \frac{1}{10} & 0 \\
 \frac{1}{10} & \frac{9}{10} & 0 \\
 0 & 0 & 1 \\
\end{array}
\right).
\end{gather}
Using this code~\cite{code} one can check whether a given qutrit thermal state can be reconstructed by an agent using the knowledge concentration protocol described in Section~\ref{sec:know_conc} and what a suitable choice of unistochastic matrix to carry out this transformation could be.

Having checked the sufficient implication $(\Leftarrow)$ let us investigate the necessary implication $(\Rightarrow)$ i.e., if the existence of a unitary $V$ such that $\rho_\beta = \tr_{12}\{V\overline{\omega}V^\dagger\}$ implies 
$\overline{\omega}^\downarrow_0 \succ \rho_\beta$.

Let us call $V\overline{\omega}V^\dagger = \mu$. Relating the action of $V$ to a unistochastic matrix, Schur-Horn theorem~\cite{marshall2010inequalities} implies a majorisation condition $\text{diag}(\overline{\omega}) \succ \text{diag}(\mu)$
between the diagonal of $\overline{\omega}$ and $\overline{\mu}$ in a given basis which w.l.o.g we can choose to be the basis $\overline{\omega}$.

Note that that $\text{diag}(\overline{\omega})$ are the eigenvalues of $\overline \omega$ which let's in the decomposition $\mathcal{H} = \bigoplus^2_{j,k = 0} \mathcal{H}_{j,k}$ we expressed in the notation $\vec r$ showed that the first marginal of $\overline{\omega}$ can be expressed $\overline{\omega}_0 = \sum^2_{j,k = 0} \vec{r}_{j,k}$. We can also express $\rho_\beta$ in this way by writing out the partial trace of $\mu$ as
\begin{align}
    \mu_{j,k} = \sum^{2}_{i = 0} \bra{i \, (i+j)\text{mod}\,3 \, (i+k)\text{mod}\, 3} &\mu \ket{i \, (i+j)\text{mod}\,3 \, (i+k)\text{mod}\, 3} \label{eq:vec_decomp}\\
&\times\ketbra{i \, (i+j)\text{mod}\,3 \, (i+k)\text{mod}\, 3},\nonumber\end{align}
where $\sum^2_{j,k = 0} \mu_{j,k} = \rho_\beta$. That is for each population of $\rho_\beta$ we have $p_i = \sum^2_{j,k=0} \mu^{(i)}_{j,k}$ is the $i$th entry of each vector of form Eq.~\eqref{eq:vec_decomp} and similarly ${\overline{\omega}_0}_i =  \sum^2_{j,k=0} r^{(i)}_{j,k}$. But now from the majorisation condition recall that ordering every entry $r_{j,k,l}$  
\begin{align}
    r_m = r_{j,k,l} =  \bra{l \, (l+j)\text{mod}\,3 \, (l+k)\text{mod}\, 3} &\mu \ket{l \, (l+j)\text{mod}\,3 \, (l+k)\text{mod}\, 3}
\end{align}
in descending order denoted $r^\downarrow_{m}$ we have $\forall \, s \in [1,27],$
\begin{align}
    \sum^{s}_{m,n = 1} r^\downarrow_m \geq \mu^\downarrow_n.
\end{align}
Consider that since $\rho_\beta$ is a thermal state, its populations are in descending order and so $p_0 = \sum^9_{n=1} \mu^\downarrow_n$ whilst the 2nd and 3rd 9 entries form $p_1$ and $p_2$ respectively. Indeed we have for $j = {1,2,3}$
\begin{align}
    \sum^{9j}_{m,n = 1} r^\downarrow_m \geq \mu^\downarrow_n = p_j
\end{align}
which implies $\overline{\omega}^\downarrow \succ \rho_\beta$ as desired.
\subsection{Proof of the existence of Knowledge Concetrating Unitaries (Theorem IV.1)}
\label{sec:know_conc_proof}
For clarity of exposition we will here present the proof of the sufficient condition ($\Leftarrow$) before the proof of the necessary condition ($\Rightarrow$).

Beginning with the sufficient condition ($\Leftarrow$), here we discuss a framework for marginal transformation via unitary operations on $N-$qudit systems. In this approach, we decompose the diagonal elements of the single-qudit marginals into elements from different subspaces,  ensuring that any unitary transformation in these subspaces will not generate local off-diagonal elements in the marginals. More precisely, we consider an $N-$qudit system with an arbitrary diagonal initial state in the energy basis, i.e.,,
\begin{align}
    \overline{\omega}=\sum_{i_1,i_2,\hdots,i_N=0}^{d-1} \hspace{-2mm} p_{i_1i_2\hdots i_N}\ketbra{i_1,i_2,\hdots,i_N}{i_1,i_2,\hdots,i_N}.
\end{align}
Throughout the manuscripts, we only use the set of unitaries that do not generate any coherence in the marginals, it is convenient to represent the diagonal elements of the marginals with a vectorized notation. In the new notation, arbitrary diagonal state of the entire system can be represented as $\textbf{p}_{\text{tot}}= \, \text{diag}\{p_{i_1i_2\hdots i_N}\}_{i_1,i_2,\hdots,i_N=0}^{d-1}$ and the reduced state of $n-$th qudit $\overline{\omega}_{A_n}=\, \tr_{\bar{A}_n}{[ \overline{\omega}]}$, where ${\bar{A}_n}$ denotes to all the subsystems except $A_n$, are diagonal and can be represented via vector $\textbf{p}_{A_n}$. We then chose a particular vector decomposition to write $\textbf{p}_{A_n}$ as sum up $d^{N-1}$ $d-$dimensional vector according to 
\begin{align}
    \textbf{p}_{A_1} &=  \sum_{i_2,\hdots,i_N=0}^{d-1} \textbf{r}_{i_2,\hdots,i_N}\\
    \textbf{p}_{A_n} &= \sum_{i_2,\hdots,i_N=0}^{d-1} \Pi^{i_n}\,\textbf{r}_{i_2,\hdots,i_N},
\end{align}
where $\textbf{r}_{i_2,\hdots,i_N}= \sum_{j=0}^{d-1} p_{j,j+i_2,\hdots, j+ i_N} \textbf{e}_j $ and all the indices are to be taken in modulo $d$. In addition, $\{\textbf{e}_j\}$ is a set of orthonormal basis of $\mathbb{R}^d$ and $\Pi $ is the cyclic permutation matrix with elements $\Pi_{ij}= \delta_{i+1\, \text{mod} \, d,j}$. Furthermore, The decomposition of $\textbf{p}_{A_n}$s into these vectors corresponds to the selection of a total of $d^{N-1}$ subspaces,  $\mathcal{H}_{i_2,\hdots,i_N}$s, within the joint Hilbert space $\mathcal{H}_{\text{tot}}$,
\begin{equation}
    \mathcal{H}_{i_2,\hdots,i_N}=\, \text{span}\{\ket{j,j+i_2,\hdots, j+i_N}\}_{j=0}^{d-1}
\end{equation}
with $\mathcal{H}_{\text{tot}}=\oplus_{i_2,\hdots,i_N=0}^{d-1} \mathcal{H}_{i_2,\hdots,i_N}$, such that any unitary acts on these subspaces does not create coherence in all marginals if initial state of the entire system is diagonal or block diagonal with respect to the subspace decomposition, i.e.,, 
\begin{equation}
\overline{\omega}= \oplus_{\alpha} \overline{\omega}_\alpha,
\label{block diagonal form}
\end{equation}
where $\alpha:= (i_2,i_3,\hdots,i_N)$. This comes from the fact that 
due to the structure of the final state and the subspace spanned by the tensor product of orthogonal local bases—where each local basis contributes uniquely to the joint basis in each subspace—there are no off-diagonal elements in any of the marginals. To show this, let's only focus on $\overline{\omega}_\alpha $ in $\alpha-$th subspace. The general form of non-normalized $\overline{\omega}_\alpha $ can be written as 

\begin{equation}
    \overline{\omega}_{i_2,\hdots,i_N}= \sum_{j,k=0}^{d-1} p_{\tiny{(j,j+i_2,\hdots, j+ i_N),(k,k+i_2,\hdots, k+ i_N)}} \ketbra{j,j+i_2,\hdots, j+ i_N}{k,k+i_2,\hdots, k+ i_N}.
\end{equation}
 So it is straightforward to show that taking partial trace with respect to any of the subsystems generates delta function $\delta_{j,k}$. It means that all the reduced states are diagonal for such a state. Since this is true for all the subspaces, the marginals of any state in the form of \ref{block diagonal form} will be diagonal.   
It is worth mentioning that there are many subspace decompositions which have this property.  

Let us now consider a unitary transformation $\mu= U_{\text{tot}}\, \overline{\omega}\, U^\dagger_{\text{tot
}} $ with $U_{\text{tot}}= \oplus_{\alpha} U_{\alpha}$ where $\alpha:= (i_2,i_3,\hdots,i_N)$. If we start initially with a diagonal state $\overline{\omega}_{\text{tot}}= \oplus_{\alpha} \overline{\omega}_\alpha$, the final state can be written in the form of 
\begin{equation}
  \mu=  \oplus_{\alpha} U_{\alpha} \overline{\omega}_\alpha  U^\dagger_\alpha
\end{equation}
Due to the structure of the final state and the subspace spanned by the tensor product of orthogonal local bases—where each local basis contributes uniquely to the joint basis in each subspace—there are no off-diagonal elements in any of the marginals. In this case, it is convenient to use the vectorized notation of the diagonal elements to study the single-qudit marginal transformations. For any diagonal $\overline{\omega}_\alpha$, the marginal transformation  can be described by 
\begin{align}
 \textbf{p}_{A_1}  \to \bar{\textbf{p}}_{A_1} &=  \sum_{i_2,\hdots,i_N=0}^{d-1} M_{i_2,\hdots,i_N}\,\textbf{r}_{i_2,\hdots,i_N}\\
 \textbf{p}_{A_n}  \to \bar{\textbf{p}}_{A_n} &= \sum_{i_2,\hdots,i_N=0}^{d-1} \Pi^{i_n}\,{M}_{i_2,\hdots,i_N}\,\textbf{r}_{i_2,\hdots,i_N},
\end{align}
where $M_\alpha$s are $d\times d $ unistochastic matrices whose components are given by the elements of unitary matrices, i.e., $(M_\alpha)_{ij}= \vert (U_\alpha)_{ij}\vert^2$. 

Lets now focus on the state transformation of the first qudit. If we apply the same unitary at each subspace, i.e., $U_\alpha =U\, \forall \alpha$, we have
\begin{align}
 \bar{\textbf{p}}_{A_1} &=  \sum_{i_2,\hdots,i_N=0}^{d-1} M\,\textbf{r}_{i_2,\hdots,i_N}= M\,\sum_{i_2,\hdots,i_N=0}^{d-1} \textbf{r}_{i_2,\hdots,i_N}\\
 &=M \textbf{p}_{A_1},
\end{align}
By applying the Schur-Horn theorem and Hardy-Littlewood-Polya (HLP) theorem~\cite{marshall2010inequalities} which state that for any two vectors $\textbf{v}, \textbf{w} \in \mathbb{R}^d$, the majorisation condition $\textbf{v}\prec \textbf{w}$ iff there exists a doubly stochastic matrix $M$ such that $\textbf{v}=M \textbf{w}$, we can conclude that all states satisfying the majorisation condition can be reached. Specifically, this implies that  $\bar{\textbf{p}}_{A_1}\prec \textbf{p}_{A_1}$.

For the necessary condition ($\Rightarrow$) let $\bar{\omega} = \omega_0 \otimes \dots \otimes \omega_{n-1}$, where $\omega_i$ is a $d$-dimensional quantum state, $i \in \{0, \dots, n-1\}$.
Let $V$ be a unitary and $\mu = V \bar{\omega} V^{\dagger}$. Let 

\begin{equation}
 \rho_{\beta} = \Tr_{1,\dots,n-1} \left( V \bar{\omega} V^{\dagger} \right)= \Tr_{1,\dots, n-1}(\mu).
\end{equation}

As each $\omega_i$ is diagonal in $\ket{0}_i, \dots, \ket{d-1}_i$ we have
\begin{align}
\bar{\omega} &= \sum_{k_0, \dots, k_{n-1}=0}^{d-1} \bra{k_0, \dots, k_{n-1}} \bar{\omega} \ket{k_0, \dots, k_{n-1}}  \ket{k_0, \dots, k_{n-1}} \bra{{k_0, \dots, k_{n-1}}}\\
&=\sum_{k_0, \dots, k_{n-1}=0}^{d-1} \lambda_{k_0, \dots, k_{n-1}}^{\bar{\omega}}  \ket{k_0, \dots, k_{n-1}} \bra{{k_0, \dots, k_{n-1}}}
 \end{align}
 
 where
 
 \begin{equation}
     \lambda_{k_0, \dots, k_{n-1}}^{\bar{\omega}} := \bra{k_0, \dots, k_{n-1}} \bar{\omega} \ket{k_0, \dots, k_{n-1}}
 \end{equation}
 
 are the eigenvalues of $\bar{\omega}$. We denote by $\lambda_{k_0, \dots, k_{n-1}}^{\bar{\omega}, \downarrow}$ the $\lambda_{k_0, \dots, k_{n-1}}^{\bar{\omega}}$ ordered from biggest to smallest (according to the dictionary order). That is 
 \begin{equation}
     \lambda_{0,\dots,0,0}^{\bar{\omega}, \downarrow} \geq \lambda_{0,\dots,0,1}^{\bar{\omega}, \downarrow} \geq \dots \geq \lambda_{0,\dots,0,d-1}^{\bar{\omega}, \downarrow} \geq \lambda_{0,\dots,1,0}^{\bar{\omega}, \downarrow} \geq \dots \geq \lambda_{d-1,\dots,d-1,d-1}^{\bar{\omega}, \downarrow}
 \end{equation}
 and
 \begin{equation}
     \{\lambda_{k_0, \dots, k_{n-1}}^{\bar{\omega},\downarrow} \mid k_0,\dots, k_{n-1} \in \{0, \dots, d-1\}\} = \{\lambda_{k_0, \dots, k_{n-1}}^{\bar{\omega}} \mid k_0,\dots, k_{n-1} \in \{0, \dots, d-1\}\}. 
 \end{equation}
 Let 
 \begin{equation}
     \bar{\omega}^{\downarrow}:= \sum_{k_0,\dots,k_{n-1}}^{d-1} \lambda_{k_0, \dots, k_{n-1}}^{\bar{\omega}, \downarrow} \ket{k_0, \dots, k_{n-1}} \bra{k_0, \dots, k_{n-1}}
 \end{equation}
 and
 \begin{equation}
     \bar{\omega}^{\downarrow}_0=\Tr_{1 \dots k-1} (\bar{\omega}^{\downarrow}).
 \end{equation}
 Given an $n$-dimensional density matrix $\sigma=(\sigma_{ij})_{i,j=0}^{n-1}$ we will denote by $\diag(\sigma)=(\sigma_{00}, \sigma_{11},\dots,\sigma_{n-1.n-1})$ its vector of diagonal elements (ordered in the canonical order) and by $\diag(\sigma)$ the diagonal matrix with $\diag(\sigma)$ as diagonal. That is $\diag(\sigma)= (\sigma_{ij} \delta_{ij})_{i,j=0}^{n-1}$. We will furthermore denote by $\lambda^{\sigma}$ the vector of eigenvalues of $\sigma$ ordered from biggest to smallest.
 
 We want to prove the following
 \begin{lemma} \label{lemma:majorised}
     $\rho_{\beta} \prec \bar{\omega}_0^{\downarrow}$ (for any unitary $V$).
 \end{lemma}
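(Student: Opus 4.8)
The plan is to reduce the statement to a single application of the Schur--Horn theorem at the level of the full $d^n$-dimensional system, followed by a block-aggregation argument that turns the resulting global majorisation into the marginal majorisation we want. First I would set $\mu := V\bar{\omega} V^{\dagger}$ and observe that, since $V$ is unitary, the eigenvalues of $\mu$ coincide with those of $\bar{\omega}$; and because $\bar{\omega}$ is diagonal in the product energy basis, these eigenvalues are exactly the entries of $\diag(\bar\omega)$, i.e. the numbers $\lambda^{\bar\omega}_{k_0,\dots,k_{n-1}}$. The Schur--Horn theorem then yields $\diag(\mu) \prec \lambda^{\bar\omega}$, understood as a majorisation between two vectors of length $d^n$.

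Next I would express both sides of the desired relation as block sums of these two length-$d^n$ vectors. The $j$-th diagonal entry of $\rho_\beta = \Tr_{1,\dots,n-1}(\mu)$ is $p_j = \sum_{k_1,\dots,k_{n-1}} \bra{j,k_1,\dots,k_{n-1}}\mu\ket{j,k_1,\dots,k_{n-1}}$, i.e. the sum of the $d^{n-1}$ diagonal entries of $\mu$ whose first index is fixed to $j$. Likewise, unpacking the definition $\bar\omega_0^\downarrow = \Tr_{1,\dots,n-1}(\bar\omega^\downarrow)$ shows that its $j$-th entry is the sum of the ordered eigenvalues occupying dictionary positions $jd^{n-1}$ through $(j+1)d^{n-1}-1$, that is, the sum of the $j$-th largest block of $d^{n-1}$ eigenvalues of $\bar\omega$. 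Two ordering facts then come for free: $\bar\omega_0^\downarrow$ is automatically in descending order (consecutive blocks of a descending list have non-increasing sums), and $\rho_\beta$, being thermal, already satisfies $p_0 \ge p_1 \ge \dots$. Hence $\sum_{j<s}p_j$ and $\sum_{j<s}(\bar\omega_0^\downarrow)_j$ are genuinely the top-$s$ partial sums required for majorisation.

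The core estimate is then the chain, valid for every $s\in\{1,\dots,d\}$,
\begin{align}
\sum_{j=0}^{s-1} p_j
&= \sum_{\substack{j<s \\ k_1,\dots,k_{n-1}}} \bra{j,k_1,\dots,k_{n-1}}\mu\ket{j,k_1,\dots,k_{n-1}} \nonumber\\
&\le \big(\text{sum of the } s\,d^{n-1} \text{ largest entries of } \diag(\mu)\big) \nonumber\\
&\le \big(\text{sum of the } s\,d^{n-1} \text{ largest entries of } \lambda^{\bar\omega}\big)
= \sum_{j=0}^{s-1}(\bar\omega_0^\downarrow)_j, \nonumber
\end{align}
where the first inequality holds because the selected $s\,d^{n-1}$ diagonal entries of $\mu$ form one particular subset, the second inequality is exactly the global relation $\diag(\mu)\prec\lambda^{\bar\omega}$ evaluated at $t=s\,d^{n-1}$, and the final equality is the block identification of the previous paragraph. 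Combined with $\sum_j p_j = \sum_j (\bar\omega_0^\downarrow)_j = 1$ (both states have unit trace), these inequalities are precisely the defining conditions of $\rho_\beta \prec \bar\omega_0^\downarrow$, which is the claim.

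I expect the only genuine obstacle to be the bookkeeping in the middle step: one must verify carefully that partial tracing the globally re-ordered operator $\bar\omega^\downarrow$ reproduces \emph{consecutive} blocks of the descending eigenvalue list (this is where the dictionary ordering fixed in the definition is essential), and that constraining the first tensor index to $\{0,\dots,s-1\}$ selects exactly $s\,d^{n-1}$ entries. Both facts are elementary, but they must be stated precisely, since the entire argument hinges on matching the marginal block size $d^{n-1}$ with the cut $t=s\,d^{n-1}$ in the global majorisation.
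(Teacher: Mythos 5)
Your core estimate is correct and is, in essence, Step~1 of the paper's own proof: apply Schur's theorem globally to get $\diag(\mu) \prec \lambda^{\bar{\omega}}$, identify the entries of $\bar{\omega}_0^{\downarrow}$ with consecutive blocks of $d^{n-1}$ sorted eigenvalues (dictionary order), and bound the block sums of $\diag(\mu)$ by the top-$s\,d^{n-1}$ partial sums of the sorted list. The bookkeeping you flag at the end (block size $d^{n-1}$ versus the cut $t = s\,d^{n-1}$) is handled identically in the paper and is fine.

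There is, however, a genuine gap. The lemma claims $\rho_{\beta} \prec \bar{\omega}_0^{\downarrow}$ for \emph{any} unitary $V$, where $\rho_{\beta} := \Tr_{1,\dots,n-1}(V\bar{\omega}V^{\dagger})$ is whatever marginal results; in general this state is neither diagonal in the computational basis nor has non-increasing diagonal entries. Your "ordering fact" that $\rho_{\beta}$, \emph{being thermal}, satisfies $p_0 \geq p_1 \geq \dots$ imports an assumption that holds only in the special case relevant to the necessity direction of the theorem, not for generic $V$. This matters because majorisation between density operators is defined through their \emph{eigenvalue} vectors, while your chain only controls partial sums of \emph{diagonal entries} of $\rho_{\beta}$ in a fixed basis order: for a non-diagonal marginal, the sum of the $s$ largest eigenvalues dominates (and can strictly exceed) any sum of $s$ diagonal entries, so bounding the latter does not bound the former, and $\lambda^{\rho_{\beta}} \prec \lambda^{\bar{\omega}_0^{\downarrow}}$ does not follow. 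The paper closes exactly this hole with its Step~2: choose $U$ diagonalising $\rho_{\beta}$ with eigenvalues in decreasing order, absorb it into the global unitary via $\tilde{V} = (U \otimes \mathds{1})V$, note that $\Tr_{1,\dots,n-1}(\tilde{V}\bar{\omega}\tilde{V}^{\dagger}) = U\rho_{\beta}U^{\dagger}$ is now diagonal and sorted while the eigenvalue list of $\bar{\omega}$ is unchanged, run your argument on $\tilde{V}$, and conclude using $\lambda^{U\rho_{\beta}U^{\dagger}} = \lambda^{\rho_{\beta}}$. This generality is not cosmetic: the Corollary following the lemma ($A \subset B$) quantifies over all unitaries, including those whose marginals are far from thermal, so your proof as written covers only the special case and needs this one additional reduction to be complete.
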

 \begin{proof}
    We work in the computational basis $(\ket{k_0, \dots, k_{n-1}})$ and denote, in slight abuse of language, the density matrices in this basis by the density operator symbols. We do the proof in two steps. In step 1 we prove the assertion to be valid in the special case, where $\rho_{\beta}$ is a diagonal matrix with diagonal entries ordered in decreasing order. In step 2 we then show the general case by showing we can essentially reduce it to the above special case.

    Step 1. We here assume, as asserted above, that $\rho_{\beta}$ is diagonal and has diagonal entries ordered in decreasing order. Schur's theorem~\cite{marshall2010inequalities} tells us that the vector of eigenvalues of a given hermitian matrix majorises it's vector of diagonal elements. As $\mu= V \bar{\omega} V^{\dagger}$ is hermitian, applying that theorem to it we get
    \begin{equation}
        \diag(\mu) \prec \lambda^{\mu}.
    \end{equation}
    But as $\lambda^{\mu} = \diag(\bar{\omega}^{\downarrow})$ we get
    \begin{equation} \label{eq:SchurFormu}
        \diag{\mu} \prec \diag{\bar{\omega}^{\downarrow}}.
    \end{equation}
    Now, we need to check $\rho_{\beta} \prec \bar{\omega}^{\downarrow}_0$. That is we want to show
    \begin{equation}
        \lambda^{\rho_{\beta}} \prec \lambda^{\bar{\omega}_0^{\downarrow}}.
    \end{equation}
    Since both $\rho_{\beta}$ and $\bar{\omega}_0^{\downarrow}$ are diagonal matrices, what we need to show is in fact
    \begin{equation}
        \diag(\rho_{\beta}) \prec \diag(\bar{\omega}_0^{\downarrow}).
    \end{equation}
 Now since the states are normalised and ordered, that last majorisation relation holds iff
 \begin{equation}
     \sum_{k=0}^l \bra{k} \rho_{\beta} \ket{k} \leq \sum_{k=0}^l \bra{k} \bar{\omega}_0^{\downarrow} \ket{k}, \quad \forall l=0,\dots, d-2.
 \end{equation}
 But this is easy to show. Indeed, let $l \in \{0,\dots, d-2\}$. Then
 \begin{align}
     \sum_{k=0}^l \bra{k} \rho_{\beta} \ket{k} &= \sum_{k=0}^l \sum_{k_1, \dots, k_{n-1} =0}^{d-1} \bra{k k_1 \dots k_{n-1}} \mu \ket{k k_1 \dots k_{n-1}}\\
     &\leq \sum_{k=0}^l \sum_{k_1, \dots, k_{n-1} =0}^{d-1} \bra{k k_1 \dots k_{n-1}} \diag(\mu)^{\downarrow} \ket{k k_1 \dots k_{n-1}}\\
      &\leq \sum_{k=0}^l \sum_{k_1, \dots, k_{n-1} =0}^{d-1} \bra{k k_1 \dots k_{n-1}} \bar{\omega}^{\downarrow} \ket{k k_1 \dots k_{n-1}}\\
      &= \sum_{k=0}^l \bra{k} \bar{\omega}_0^{\downarrow} \ket{k},
 \end{align}
where $\diag(\mu)$ is the diagonal matrix that has the diagonal elements of $\mu$ in the diagonal. And $\diag(\mu)^{\downarrow}$ is $\diag(\mu)$ where the diagonal elements are reodered from biggest to smallest (in the canonical order). The inequality of the second line holds because the sum of $p$ elements of a list of values is always smaller than the sum of the greatest $p$ elements of said list of values. The inequality of the third line holds thanks to equation~\ref{eq:SchurFormu}. This concludes step 1.

We now turn our attention to step 2. For this suppose now that $\rho_{\beta}$ is arbitrary. Let $U$ be the unitary diagonalising $\rho_{\beta}$ and let
\begin{equation}
    \tilde{\rho}_{\beta} = U \rho_{\beta} U^{\dagger}.
\end{equation}
We also choose $U$ such that
\begin{equation}
    \bra{0} \tilde{\rho}_{\beta} \ket{0} \geq \bra{1} \tilde{\rho}_{\beta} \ket{1} \geq \dots \geq \bra{d-1} \tilde{\rho}_{\beta} \ket{d-1}.
\end{equation}
Now
\begin{align}
    \tilde{\rho}_{\beta} &= U \Tr_{1 \dots n-1} (\mu) U^{\dagger}\\
    &= \Tr_{1 \dots n-1} \left( \tilde{\mu} \right),
\end{align}
where $\tilde{\mu} = \tilde{V} \bar{\omega} \tilde{V}^{\dagger}$,
with $\tilde{V}= U \otimes \mathds{1}_{1, \dots, n-1} V$. Note that $\tilde{V}$ is a unitary and that $\tilde{\mu}$ is hermitian with
\begin{equation}
    \lambda^{\tilde{\mu}}= \diag{\bar{\omega}^{\downarrow}}.
\end{equation}
We are therefore in the situation of step 1 with $\tilde{\rho}_{\beta}$ (and $\tilde{\mu}$) in place of $\rho_{\beta}$ (and $\mu$). Using step 1 we hence get
\begin{equation}
    \tilde{\rho}_{\beta} \prec \bar{\omega}_0^{\downarrow}.
\end{equation}
That is 
\begin{equation}
    \lambda^{\tilde{\rho}_{\beta}} \prec \lambda^{\bar{\omega}_0^{\downarrow}}.
\end{equation}
And as $\lambda^{\tilde{\rho}_{\beta}}= \lambda^{\rho_{\beta}}$ we get
\begin{equation}
    \lambda^{\rho_{\beta}} \prec \lambda^{\bar{\omega}_0^{\downarrow}}.
\end{equation}
That is
\begin{equation}
    \rho_{\beta} \prec \bar{\omega}_0^{\downarrow},
\end{equation}
as desired.
\end{proof}

There is a small Corollary to this that we want to state now. To this let
\begin{align}
    A &= \{ \rho_{\beta} \mid \rho_{\beta}= \Tr_{1, \dots, n-1}(V \bar{\omega} V^{\dagger}), \; V \text{ unitary}\},\\
    B &= \{ \sigma \mid \sigma \prec \bar{\omega}^{\downarrow}_0 \}.
\end{align}
$A$ is the set of all states reachable on 0 upon applying a global unitary to $\bar{\omega}$. $B$ is the set of all states (on 0) that are majorised by $\bar{\omega}_0$.
\begin{corollary}
    $A \subset B$.
\end{corollary}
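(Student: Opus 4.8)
The plan is to read the corollary off directly from Lemma~\ref{lemma:majorised}, since all of the analytic work has already been carried out there. The inclusion $A \subset B$ is nothing more than a set-theoretic repackaging of the conclusion $\rho_{\beta} \prec \bar{\omega}_0^{\downarrow}$, so the only thing one must verify is that the hypotheses of the lemma match the defining property of $A$ and that its conclusion matches the defining property of $B$.

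Concretely, I would take an arbitrary element $\sigma \in A$. By the definition of $A$ there exists some unitary $V$ with $\sigma = \Tr_{1,\dots,n-1}(V \bar{\omega} V^{\dagger})$; here $\sigma$ is precisely the marginal that the dummy symbol $\rho_{\beta}$ denotes in the definition of $A$. The crucial observation is that Lemma~\ref{lemma:majorised} is established for an \emph{arbitrary} unitary, with the parenthetical ``(for any unitary $V$)'' making this explicit, so it applies verbatim to the witnessing $V$ fixed above. Invoking it gives $\sigma \prec \bar{\omega}_0^{\downarrow}$, which is exactly the membership condition defining $B$. Hence $\sigma \in B$, and since $\sigma \in A$ was arbitrary we conclude $A \subset B$.

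There is essentially no obstacle at this stage: the content lives entirely in Lemma~\ref{lemma:majorised}, whose proof combines Schur's theorem (the eigenvalue vector of a Hermitian matrix majorises its vector of diagonal elements) with the two-step reduction to the diagonally ordered case. The only care needed in phrasing the corollary is notational hygiene, namely recognising that the free variable in the definition of $A$ ranges over \emph{all} first-site marginals achievable by some global unitary, and that the relation $\prec$ appearing in $B$ is literally the conclusion of the lemma. Finally, I would note that the reverse inclusion $B \subset A$, and hence the equality $A=B$ underpinning the ``if and only if'' of Theorem~IV.1, is what the constructive sufficiency argument (the $(\Leftarrow)$ direction, using a block-diagonal $V=\bigoplus_{\alpha} U_{\alpha}$ with equal blocks together with the Hardy--Littlewood--P\'olya theorem) provides; the present corollary therefore records precisely the ``easy'' half of the full characterisation.
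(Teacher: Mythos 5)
Your proposal is correct and matches the paper's own proof essentially verbatim: take an arbitrary element of $A$, apply Lemma~\ref{lemma:majorised} (which holds for any unitary $V$) to obtain the majorisation $\prec \bar{\omega}_0^{\downarrow}$, and conclude membership in $B$. The extra remarks on the reverse inclusion and Theorem IV.1 are accurate context but not needed for the corollary itself.
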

\begin{proof}
    Let $\rho_{\beta} \in A$. Then using Lemma~\ref{lemma:majorised}, $\rho_{\beta} \prec \bar{\omega}_0^{\downarrow}$. That is $\rho_{\beta} \in B$. As $\rho_{\beta}$ was an arbitrary element of $A$ we just proved $A \subset B$, as desired.
\end{proof}

\section{Work Extraction from unknown states \label{A:applications}}

\begin{figure*}[h]
        \centering
	\includegraphics{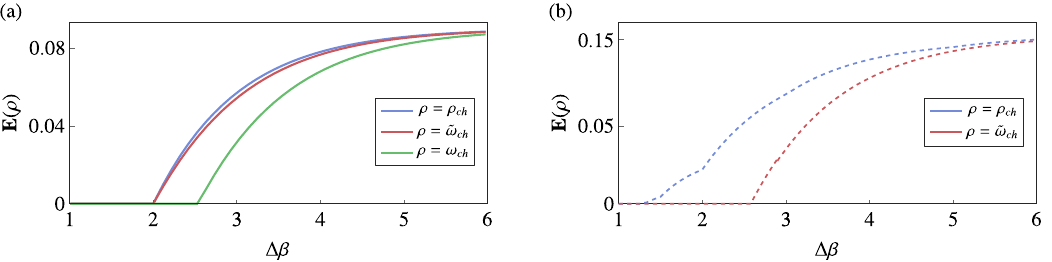}
	\caption{\justifying{ 
 \textbf{Work extraction from unknown states}. For two thermal states with inverse temperatures $\beta_c$ and $\beta_h$, and an equidistant energy spectrum, we plot the ergotropy as a function of the inverse temperature difference $\Delta \beta := \beta_c - \beta_h$. In panel (a), for a three-dimensional thermal state, the blue curve shows the ergotropy extracted with full knowledge of the state, while the red curve represents the ergotropy using the symmetrized estimate. The green curve shows the ergotropy without symmetrisation, yielding less extractable work. In panel (b), for a six-level thermal state, the gap between the actual ergotropy (dashed blue) and the estimate (dashed red) increases. As dimensionality grows, the symmetrized estimate becomes less accurate, as information is more spread across the system}}
\label{F-work-extraction}
\end{figure*}

We now study a paradigmatic task in both quantum and classical thermodynamics: the extraction of work from two coupled systems. Consider the scenario where an agent is given two thermal states at different, unknown temperatures, and its task is to extract the optimal amount of work from these two subsystems. The agent does not have any information about the population of the joint state; instead, it is allowed to use probes to reconstruct imperfect copies of the two states. How reliably can the agent extract work from the two subsystems by using its probes to obtain an optimal protocol for work extraction?

In order to address this question, we begin by considering two subsystems, each initially prepared in thermal (Gibbs) states, $\rho_{\beta_X} = \frac{e^{-\beta_X H}}{\tr(e^{-\beta_X H})}$, where $\beta_X$ denotes the inverse temperature of the subsystem $\ms{X} \in \{c, h\}$, representing the cold and hot subsystems, respectively. Each thermal state is described by a local Hamiltonian $H_{\ms X} = \sum_{i=1}^{d_{\ms X}} \epsilon^{\ms{X}}_i \ketbra{i}{i}_{\ms X}$ of dimension $d_{\ms X}$, where $\epsilon^{\ms X}_i$ indicates the $i$th energy eigenvalue and $\ket{i}_{\ms X}$ the corresponding eigenstate of the subsystem $\ms{X}$. As a result, the state of the composite system is given by 
\begin{equation}
   \rho_{ch}:=\rho_{\beta_c} \otimes \rho_{\beta_h} = \frac{e^{-\beta_c H_c}}{Z_c}\otimes  \frac{e^{-\beta_ h H_h}}{Z_h},
\end{equation}
where $Z_{\ms{X}} = \tr(e^{-\beta_X H_X})$ is the partition function for subsystem $\ms{X}$. 
Since the agent has no prior knowledge of $\beta_c$ and $\beta_h$, other than the fact that the two inverse temperatures differ, the optimal strategy is to distribute the information of $\rho_{ch}$ by symmetrising over two probes. This leads to the construction of the symmetrised state $\widetilde{\omega}_{\ms{X}} = \frac{1}{d-1}[\rho_{\beta_{\ms X}} + \frac{(d-2)}{d}\mathbbm{1}_d]$, where $\mathbbm{1}_d$ is the identity matrix of dimension $d$. As a result, the agent’s estimate of the composite state is $\widetilde{\omega}_{ch}:= \widetilde{\omega}_c \otimes \widetilde{\omega}_h$.

As a figure of merit, we adopt the notion of ergotropy~\cite{allahverdyan2004maximal}, which quantifies the maximum amount of work that can be extracted from a quantum system through cyclic and unitary operations. It is defined as the minimisation over all possible unitary operations  $U$, such that the energy exchange is minimized
\begin{equation}\label{Eq:ergotropy}
    \mathbf{E}(\rho):= \tr(\rho H) - \underset{U}{\min} \tr[H(U\rho U^{\dagger})].
\end{equation} 

Our protocol for work extraction follows the following steps:
\begin{enumerate}
    \item  First, we construct the optimal work-extraction unitary, $U^{\star}_{\tilde{\omega}_{ch}}$, which maximizes the ergotropy with respect to the symmetrized state $\widetilde{\omega}_{ch}$.
    \item Second, we apply $U^{\star}_{\tilde{\omega}_{ch}}$ to the unknown thermal state $\rho_{ch}$ to estimate the amount of work that can be extracted from the unknown state. Thus, we are interested in the following quantity:
    \begin{equation}
       \hspace{0.6cm} \mathbf{E}(\rho_{ch}):= E_{c}+E_{h} - \tr[H_{ch}(U^{\star}_{\tilde{\omega}_{ch}}\rho_{ch} U^{\star}{_{\tilde{\omega}_{ch}}}^{\dagger})].
    \end{equation}
    where $E_{\ms X}:= \tr(H_{\ms X} \rho_{\beta_{\ms X}})$ is the average energy of each subsystem and $H_{ch}:= H_c\otimes \iden + \iden\otimes H_h$ is the Hamiltonian of the composite system.
\end{enumerate}

Naturally, this protocol results in extracting an amount of work that is equal to or lower than the actual value, as the symmetrized state only provides an imperfect estimate of the true system, leading to suboptimal work extraction. 

Despite the fact that the agent only has access to an estimate of the true state, the work extracted using this symmetrized approximation is still quite close to the optimal value. As shown in Figure~\hyperref[F-work-extraction]{\ref{F-work-extraction}a} the ergotropy of the estimated state (red curve) closely follows the ergotropy of the actual thermal state (in blue curve). A possible explanation is that the optimal unitary requires the populations of the estimate and the actual state to be ordered similarly. When the populations are ordered in the same way, the optimal protocol will perform the same series of swaps, even if the population values themselves differ slightly. Conversely, if no symmetrisation is performed (see panel~\hyperref[F-work-extraction]{\ref{F-work-extraction}b}), the ergotropy (green curve) is significantly lower than the actual value, resulting in worse performance and highlighting the advantage of symmetrisation. In this case, the information is concentrated in specific probes rather than being spread across both. However, in higher-dimensional systems, as shown in panel~\hyperref[F-work-extraction]{\ref{F-work-extraction}c}, using lower-dimensional probes is not the best approach. Spreading the information across the estimate's entries is not always optimal, as reconstructing $\rho_{ch}$ becomes increasingly difficult as the dimensionality grows.

An agent's ability to extract work from an unknown source was treated in a recent work~\cite{safranek_2023} where the agent extracting work using coarse-grained unitaries was shown to have the same work extraction capabilites as a bling agent using Haar random unitaries. In contrast with this, our symmetrisiation protocol approaches the ergotropic value.

\section{Implications for the resource theory of athermality.}
\label{A:resource_theory}
\begin{figure*}[h]
       \centering
	\includegraphics{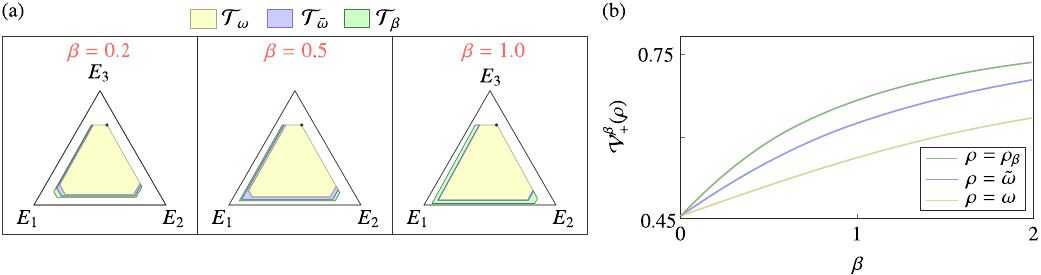}
	\caption{\justifying{ 
 \textbf{Imperfect state transformation}. For a three-level system with populations $\v{p} = (0.1, 0.2, 0.7)$, represented by the black dot $\bullet$, and energy spectrum $E_1 = 0$, $E_2 = 1$, and $E_3 = 2$, we plot its imperfect future cone of reachable states under thermal operations. In panel (b), we display the volume of the imperfect and perfect future thermal cones.}}
\label{F-state-transformation}
\end{figure*}
Our results also motivate a discussion of hidden cost assumptions in quantum resource theories~\cite{chitambarGour}. More precisely, the basic elements of the resource-theoretic framework are \emph{free} operations and \emph{free} states which are available to an agent at no cost. But this begs the question-- \textit{what are the resources behind perceiving these states and operations as free}? In our view the agent must at the very least have knowledge of these free states i.e. be able to reconstruct these free states unitarily. If they have less resources and are not able to perfectly reconstruct these states then their set of free states changes.  The resource theory of thermodynamics~\cite{Janzing2000,horodecki2013fundamental, brandao2015second,Lostaglio2019} gives a complete set of laws for characterising general state transformations under thermodynamic constraints. The downside of this formalism is that it typically assumes that $\rho_{\beta}$ is a free state which is left invariant by the free thermal operations. But as we have shown in this work there is a cost of to having knowledge of a thermal state. Our framework allows us to address the question of how state transformations are affected when there is no precise information about the ``free" thermal state, but only an approximate estimation of it. This can be translated into allowing the agent access to probes, which help reconstruct the unknown thermal state and compare the achievability of states with the actual set of achievable transformations. 

It is known that the set of achievable states via thermal operations from a given incoherent initial state can be fully characterized using the notion of thermomajorisation~\cite{horodecki2013fundamental}. The so-called future thermal cone, $\mathcal{T}_{\beta}(\rho)$~\cite{deoliveirajunior2022}. Consequently, to understand how thermodynamic transformations are affected by the inability to access a specific thermal state $\rho_{\beta}$, we could define the imperfect future thermal cone $\mathcal{T}_{\bar{\omega}}$ after symmetrisation and for a single estimate $\mathcal{T}_{\omega}$. Note that both imperfect future thermal cones are still characterized by the same set of rules as the standard cone. The only difference lies in the relative thermal state used to construct its extreme points. In Fig.~\ref{F-state-transformation}, we give a gist of how state transformations would be modified under imperfect knowledge of the thermal state. As observed, both the symmetrised estimate and the single estimate are very close to the actual future thermal cone, though the symmetrised estimate approaches relatively better than for a single estimate. Of course, this is a rough approximation, as we assume the system's thermal state is unknown and do not directly refer to the thermal bath. Similarly, in panel \hyperref[F-work-extraction]{\ref{F-state-transformation}b}, we compare the volumes of these regions to quantify how closely the volumes of the symmetrised and single estimates approach that of the actual thermal cone.

\section{Thermodynamic Considerations \label{Sec:thermodynamic-considerations}}

In this section, we briefly give insight into the thermodynamic cost associated with extracting information from a system by correlating it with a probe of limited dimension and generating an estimate based on these correlations.

To examine the thermodynamic cost associated with this process, we will use the notion of entropy production~\cite{landi_entropy}. This concept measures the irreversibility of a process and allows us to infer how much heat was irreversibly dissipated throughout the process. For a unitary process between a system $\ms S$ and a reservoir $\ms R$, the entropy production can be expressed as a combination of the mutual information between the system and the reservoir at the end of the process, and the relative entropy of the reservoir before and after the process~\cite{Esposito_2010,Reeb_2014}
\begin{gather}
    \Sigma = \mathcal{I}(\ms{S'}\colon \ms{R'}) + \mathcal{D}(\rho'_R ||\rho_R), \label{eq:ent_prod}
\end{gather}
where $\mathcal{I}(\ms{S'}\colon \ms{R'}) = S(\rho'_\ms{S}) + S(\rho'_{\ms R}) - S(\rho'_{\ms{SR}})$, $\mathcal{D}(\rho'_\ms{R} ||\rho_\ms{R}) = \tr(\rho'_\ms{R} \log \rho_\ms{R}) - S(\rho'_\ms{R})$ and $S(\rho) = -\tr\{\rho \log \rho\}$ is the von Neumann entropy.

Starting with the information extraction step $U_{IE}(\ketbra{0}_k \otimes \rho_\beta)U_{IE}$ let us take $\ms S$ to be the probe $\ketbra{0}_k$ and $\ms R$ to be the unknown system $\rho_\beta$, we observe that this correlating process leaves the system unchanged and any entropy production will result from a change in the entropy of the probe. We assume that the main system, which has a larger dimension than the probe, acts as the reservoir and so the relative entropy term vanishes. Similarly, the mutual information term reduces to the entropy of the probe after the information extraction step, so that the entropy production becomes simply $\Sigma = \Delta S_\ms{P} = S(\sigma)$. As a result, the more coarse-grained the probe is—i.e., the larger the difference between $k$ and $d$--and the more uniform the coupling $\vec t$, the larger the dissipation in the information extraction step. 

For the estimate generation step, we again treat the larger system—this time the memory—as the reservoir and the probe as the system in this process. In this case, the relative entropy of the reservoir's states before and after the process is
\begin{gather}
    \mathcal{D}(\omega || \ketbra{0}{0}_d) = -S(\omega)-\tr(\omega \log \ketbra{0}{0}_d),
\end{gather}
which is well-defined since $\tr(\omega \ketbra{0}{0}_d) > 0$, but diverges for the zero eigenvalues of $\ketbra{0}{0}_d$. This can be seen as related to the fact that preparing a pure state requires an infinite thermodynamic cost~\cite{Guryanova_2020,taranto_2020,Buffoni2023cooperativequantum}. Instead, let us consider a near-perfect memory $\rho_M = (1 - \epsilon)\ketbra{0}{0}_d + \frac{\epsilon}{d}\mathbb{1}_d$ where focusing on the trace gives
$
\tr(\omega \log \rho_M) = \frac{\overline p_0}{t_0}\log(1-\epsilon) + (t_0 - 1)\frac{\overline p_0}{t_0} \log(\frac{\epsilon}{d}) + \sum^{k - 1}_{i = 1} \frac{\overline p_i}{t_i} t_i \log(\frac{\epsilon}{d})$
and so the relative entropy
\begin{align} \mathcal{D}(\omega || \rho_M) &= -S(\omega) - \frac{\overline p_0}{t_0} \left(\log(1-\epsilon)\left(\frac{\epsilon}{d}\right)^{\frac{t_0 -1}{t_0}}\right) \nonumber  \\ & \hspace{1.55cm} - (1 - \overline{p_0})\log(\frac{\epsilon}{d}).
\end{align} 
Finally, the mutual information in this case takes the form of
\begin{align}
    \mathcal{I}(\ms{S'}:\ms{R'}) &= S(\sigma') + S(\omega) - S(\sigma) \\
    &= \Delta S_{\ms{P}'} + \Delta S_{M} \nonumber
\end{align}
where we use the facts that entropy is invariant under unitary processes and additive under tensor product. Consequently, the heat dissipated during the estimate generation can be written as
\begin{align}
\Sigma = \Delta S_{P'} &+ \frac{\overline p_0}{t_0} \left(\log\left[(1-\epsilon)\left(\frac{\epsilon}{d}\right)^{\frac{t_0 -1}{t_0}}\right]\right) \label{eq:diss_est_gen}\\
&- (1 - \overline{p_0})\log(\frac{\epsilon}{d}). \nonumber
\end{align}
Therefore, we observe that greater dissipation occurs when the original state of the unknown system is colder—i.e., more pure—resulting in a larger contribution from $\overline{p}_0$ contribution. More generally, in the dissipation of both the information extraction and estimate generation processes, we note that if the probe or memory were less pure, and thus more thermal, the temperature difference from the state they are converted to would be smaller, leading to lower dissipation, but the resulting estimate would be less accurate. It is good to note that while the examination of the thermodynamic cost involved in the acquisition of quantum knowledge is novel and far from addressed in these initial observations-- the examination of the thermodynamic costs inherent in the quantum acquisition of classical knowledge is a much studied field~\cite{brillouin1953negentropy,PhysRevLett.102.250602,Guryanova_2020,wilde_triple_trade_off,Debarba2024}.We expect that in a full account of the thermodynamics of quantum knowledge acquisition the main thermodynamic cost will be due to state preparation of near pure probes and memories and any dissipation throughout the process will be small in comparison.

\end{document}